\tikzset{grow'=right} 
\tikzset{every tree node/.style={anchor=base west}} 
\makeatletter\@ifpackageloaded{mathpazo}\@tempswatrue\@tempswafalse
  \DeclareFontFamily{OT1}{pzc}{}
  \DeclareFontShape{OT1}{pzc}{m}{it}{<-> s * [1.15] pzcmi7t}{}
  \DeclareMathAlphabet{\mathpzc}{OT1}{pzc}{m}{it}
\makeatletter\@ifpackageloaded{biblatex}{%
  \usepackage{csquotes} 
  \bibliography{../../references}
  \renewbibmacro{in:}{%
    \ifentrytype{incollection}{\printtext{\bibstring{in}\intitlepunct}}{}}
  \renewbibmacro{publisher+location+date}{%
    \iflistundef{publisher}
      {\setunit*{\addcomma\space}}
      {\setunit*{\addcomma\space}}%
    \printlist{publisher}%
    \setunit*{\addcomma\space}%
    \printlist{location}%
    \setunit*{\addcomma\space}%
    \usebibmacro{date}%
    \newunit}
  \DeclareFieldFormat[article]{pages}{#1\isdot}
  \DeclareFieldFormat[article,incollection,inproceedings,unpublished,eprint]{title}{#1\isdot}
  \DeclareFieldFormat[thesis]{title}{\mkbibemph{#1\isdot}}
  \DeclareFieldFormat[unpublished]{date}{(#1)\isdot}
  \DeclareFieldFormat[unpublished]{note}{#1\nopunct} 
  \DeclareFieldFormat[eprint]{date}{(#1)\isdot}
  \DeclareFieldFormat[eprint]{note}{#1\nopunct} 
  \DeclareFieldFormat[article]{journaltitle}{\mkbibemph{#1\isdot}}
  \AtEveryBibitem{%
    \ifentrytype{book}{}{
      \clearname{editor}
    }
  }
  \newbibmacro*{bbx:parunit}{%
    \ifbibliography
      {\setunit{\bibpagerefpunct}\newblock
       \usebibmacro{pageref}%
       \clearlist{pageref}%
       \setunit{\adddot\par\nobreak}}
      {}
  }
  \renewbibmacro*{doi+eprint+url}{%
    \usebibmacro{bbx:parunit}
    \iftoggle{bbx:doi}
      {\printfield{doi}}
      {}%
    \iftoggle{bbx:eprint}
      {\usebibmacro{eprint}}
      {}%
    \iftoggle{bbx:url}
      {\usebibmacro{url+urldate}}
      {}
  }
  \renewbibmacro*{eprint}{%
    \usebibmacro{bbx:parunit}
    \iffieldundef{eprinttype}
      {\printfield{eprint}}
      {\printfield[eprint:\strfield{eprinttype}]{eprint}}
  }
  \renewbibmacro*{url+urldate}{%
    \usebibmacro{bbx:parunit}
    \printfield{url}%
    \iffieldundef{urlyear}
      {}
      {\setunit*{\addspace}%
       \printtext[urldate]{\printurldate}}
  }
  
}{}\makeatother
\declaretheorem[numberwithin=section,refname={theorem,theorems},Refname={Theorem,Theorems}]{theorem}
\declaretheorem[sibling=theorem,style=definition]{definition}
\declaretheorem[sibling=theorem,name=Example,style=definition]{example}
\declaretheorem[sibling=theorem,name=Lemma]{lemma}
\declaretheorem[sibling=theorem,name=Proposition]{proposition}
\makeatletter\@ifpackageloaded{hyperref}{%
  \usepackage{xcolor}
  \definecolor{dark-red}{rgb}{0.4,0.15,0.15}
  \definecolor{dark-blue}{rgb}{0.15,0.15,0.4}
  \definecolor{medium-blue}{rgb}{0,0,0.5}
  \hypersetup{
    colorlinks,
    linkcolor={dark-red},
    citecolor={dark-blue},
    urlcolor={medium-blue}%
  }

}{}\makeatother
\newcommand{\address}[1]{\vspace{-2em}\begin{center}{\footnotesize #1}\end{center}}
\DeclareMathOperator{\pref}{pref}
\DeclareMathOperator{\suff}{suff}
\providecommand{\abs}[1]{\lvert#1\rvert}
\newcommand{\infw}[1]{\mathbf{#1}}
\newcommand{\N}{\mathbb{N}}
\newcommand{\Z}{\mathbb{Z}}
\newcommand{\Lang}[2][]{\mathcal{L}_{#1}(#2)}
\newcommand{\del}{\partial}
\newcommand{\keywords}[1]{\par\noindent{\footnotesize{\em Keywords\/}: #1}}
\begin{document}
  \title{On winning shifts of marked uniform substitutions}
  \author{Jarkko Peltomäki and Ville Salo\\
          \small \href{mailto:r@turambar.org}{r@turambar.org}, \quad \href{mailto:vosalo@utu.fi}{vosalo@utu.fi}}
  \date{}
  \maketitle
  \address{Turku Centre for Computer Science TUCS, Turku, Finland\\
  University of Turku, Department of Mathematics and Statistics, Turku, Finland}

  \noindent
  \hrulefill
  \begin{abstract}
    \vspace{-1em}
    \noindent

    \noindent
    The second author introduced with I. Törmä a two-player word-building game [Playing with Subshifts, Fund. Inform.
    132 (2014), 131--152]. The game has a predetermined (possibly finite) choice sequence $\alpha_1$, $\alpha_2$,
    $\ldots$ of integers such that on round $n$ the player $A$ chooses a subset $S_n$ of size $\alpha_n$ of some fixed
    finite alphabet and the player $B$ picks a letter from the set $S_n$. The outcome is determined by whether the word
    obtained by concatenating the letters $B$ picked lies in a prescribed target set $X$ (a win for player $A$) or not
    (a win for player $B$). Typically, we consider $X$ to be a subshift. The winning shift $W(X)$ of a subshift $X$ is
    defined as the set of choice sequences for which $A$ has a winning strategy when the target set is the language of
    $X$. The winning shift $W(X)$ mirrors some properties of $X$. For instance, $W(X)$ and $X$ have the same entropy.
    Virtually nothing is known about the structure of the winning shifts of subshifts common in combinatorics on words.
    In this paper, we study the winning shifts of subshifts generated by marked uniform substitutions, and show that
    these winning shifts, viewed as subshifts, also have a substitutive structure. Particularly, we give an explicit
    description of the winning shift for the generalized Thue-Morse substitutions. It is known that $W(X)$ and $X$
    have the same factor complexity. As an example application, we exploit this connection to give a simple derivation
    of the first difference and factor complexity functions of subshifts generated by marked substitutions. We describe
    these functions in particular detail for the generalized Thue-Morse substitutions.

    \vspace{1em}
    \keywords{two-player game, winning shift, marked substitution, factor complexity, generalized Thue-Morse word}
    \vspace{-1em}
  \end{abstract}
  \hrulefill

  \section{Introduction}
  In the paper \cite{2014:playing_with_subshifts}, the second author introduced with I. Törmä a two-player
  word-building game. The two players, Alice and Bob, agree on a finite alphabet $S$, a target set $X$ of words over
  $S$, game length $n \in \N \cup \{\N\}$, and a choice sequence $\alpha_1 \alpha_2 \cdots \alpha_n$ (a word) of
  integers in $\{1,2,\ldots,\abs{S}\}^n$. On the round $j$ of the game, $1 \leq j \leq n$, Alice first chooses a subset
  $S_j$ of $S$ of size $\alpha_j$ and then Bob picks a letter $a_j$ from the subset $S_j$. During the game, Alice and
  Bob thus together build the word $a_1 a_2 \cdots a_n$ (finite or infinite). If this built word is in the target set
  $X$, then Alice wins, otherwise Bob does. In other words, Alice aims to build a valid word of $X$ while her adversary
  Bob attempts to introduce a forbidden word.
  
  In studying games of this sort, it would be typical to fix a choice sequence and see what conditions on $X$ guarantee
  the existence of a winning strategy for one of the players. The work of \cite{2014:playing_with_subshifts} adopts the
  opposite point of view: fix a set $X$ and see for which choice sequences Alice has a winning strategy. This set of
  choice sequences, dubbed as the winning set $W(X)$ of $X$, turns out to be a very interesting object. First of all,
  if $X$ is a subshift, then $W(X)$, now called the winning shift of $X$, is also a subshift, and the set of factors of
  $W(X)$ of length $k$ is exactly the winning set of factors of $X$ of length $k$. Actually the winning set $W(X)$
  inherits many properties of $X$. For instance, if $X$ is a regular language, so is $W(X)$, and if $X$ computable,
  then so is $W(X)$. The most interesting result, which sparked the research in this paper, is the fact that the sets
  $X$ and $W(X)$ have the same cardinality so, for a subshift $X$, the winning shift $W(X)$ has the same entropy and
  factor complexity function as $X$. Now the winning set $W(X)$ is in a sense simpler than $X$ because it is downward
  closed: if any letter of a choice sequence in $W(X)$ is downgraded to a smaller letter, then the resulting word is
  still in $W(X)$. The winning set $W(X)$ is thus a rearrangement of $X$ to a downward closed set. Indeed, the winning
  set can be significantly simpler: for instance, the winning set of a Sturmian subshift is the subshift over $\{1,2\}$
  whose words contain at most one letter $2$.

  Descriptions of the winning shifts for particular subshifts remain largely unknown. In this work, we provide such
  descriptions for the winning shifts of subshifts generated by marked uniform substitutions. A marked substitution is
  a substitution such that all images of letters begin with distinct letters and end with distinct letters. We prove
  that all long enough choice sequences in such a winning shift are obtained from a few core choice sequences by
  substitution (\autoref{thm:marked_substitutive}). Let us make this more precise. Let $\tau\colon S^* \to S^*$ be a
  marked uniform substitution of length $M$, and let $w$ be a short choice sequence in the language of the winning
  shift $W(\tau)$ of the subshift generated by $\tau$. Write $w = \diamond u a$ for letters $\diamond$ and $a$. Then
  $z \sigma(u) a$ is in the language of $W(\tau)$; here $\sigma$ is the substitution defined by $\sigma(k) = k1^{M-1}$
  and the word $z$ is in the winning set of certain suffixes of the $\tau$-images of a subset of $S$ of size
  $\diamond$. All long enough choice sequences in the language of $W(\tau)$ are essentially obtained in this way. In
  general, the short choice sequences and possible words $z$ can be very complex and they elude any simple description,
  but they can be efficiently computed. This together with \autoref{thm:marked_substitutive} allows us to rapidly
  compute the language of the winning shift $W(\tau)$. If we make additional assumptions on $\tau$, then the situation
  can be simplified. For instance, if $\tau$ is permutive (letters at a fixed position of the $\tau$-images form a
  permutation of the alphabet $S$), then $z$ is simply of the form $\diamond 1^i$ for some $i$ such that $0 \leq i < M$
  (\autoref{prp:permutive}). This class of permutive uniform substitutions includes the generalized Thue-Morse
  substitutions. For them, we compute all involved parameters and give full description of the whole winning shift
  (\autoref{sec:gtm}).

  The structure of the winning shift of a marked uniform substitution is quite easy to comprehend, and we apply our
  results to give a simple derivation of the first difference function of such a substitution
  (\autoref{thm:first_difference}). This function can in turn be used to derive the factor complexity function. A. Frid
  has derived these functions previously with other methods \cite{1998:on_uniform_d0l_words}; see also
  \cite{1996:reconnaissabilite_des_substitutions_et_complexity_des_suites_automatiques}. Our arguments and Frid's
  arguments, which by the way apply in a more general setting, in the end reduce to the same fundamental observations,
  but the high-level view is completely different. We prefer gaming and feel that analyzing Alice and Bob's match is
  fresh and, more importantly, fun. The aim of this paper is to describe the winning shift; the connection to factor
  complexity is more of a motive for the study, a curiosity. We do, however, derive the factor complexity function in
  full detail for the generalized Thue-Morse words, just as we describe their winning shifts completely
  (\autoref{sec:gtm}). These complexity functions have been derived in full generality previously by \v{S}. Starosta in
  \cite{2012:generalized_thue-morse_words_and_palindromic_richness} using an intriguing connection to so-called
  $G$-rich words. Results in specialized cases were known before Starosta, see
  \cite{1989:enumeration_of_factors_in_the_thue-morse_word,1989:some_combinatorial_properties_of_the_thue-morse_sequence,1995:subword_complexity_of_the_generalized_thue-morse_word}.
  A short version of this paper with results applying only to the generalized Thue-Morse words was presented in the
  proceedings of RuFiDiM IV \cite{2017:on_winning_shifts_of_generalized_thue-morse_substitutions}.

  The paper is organized as follows. In the next section, we give the necessary definitions and results needed. After
  this in \autoref{sec:tm_example}, we outline the structure of the winning shift of the Thue-Morse substitution and
  use it as a motivating example to introduce our ideas. \autoref{sec:main_results} contains the main results. We show
  that generally short choice sequences can be substituted to obtain longer choice sequences, but the additional
  assumption of markedness is needed for desubstitution. We end \autoref{sec:main_results} by deriving a recurrence for
  the first difference function of a marked uniform substitution. The final section is devoted to the generalized
  Thue-Morse substitutions. We completely describe their winning shifts and, as an application, derive formulas for
  their factor complexity functions.

  \section{Notation and Preliminary Results}\label{sec:preliminaries}
  \subsection{Standard Definitions}
  Here we briefly define word-combinatorial notions; further details are found in, e.g.,
  \cite{2002:algebraic_combinatorics_on_words}. An alphabet $S$ is a nonempty finite set of letters, and we denote by
  $S^*$ the set of finite words over $S$. The set of words over $S$ of length $n$ is denoted by $S^n$, and by
  $S^{\leq n}$ we denote the set of words over $S$ with length at most $n$. Infinite words over $S$ are sequences in
  $S^\N$. The length of a finite word $w$ is denoted by $\abs{w}$, and the empty word $\varepsilon$ is the unique word
  of length $0$. Suppose that $w$ is a word (finite or infinite) such that $w = uzv$ for some words $u$, $z$, and $v$.
  Then we say that $z$ is a factor of $w$. If $u = \varepsilon$ (respectively $v = \varepsilon)$, then we call the
  factor $z$ a prefix (respectively suffix) of $w$. If $u = \varepsilon$ and $z \neq w$, then $z$ is a proper prefix of
  $w$; similarly we define a proper suffix of $w$. We say that $z$ occurs at position $\abs{u}$ of $w$; the position
  $\abs{u}$ is an occurrence of the factor $z$. Thus we index letters from $0$. The word $\del_{i,j}(w)$, where
  $i + j \leq \abs{w}$, is obtained from the word $w$ by deleting $i$ letters from the beginning and $j$ letters from
  the end. An infinite word is ultimately periodic if it is of the form $uvvv\cdots$; otherwise it is aperiodic.

  A \emph{subshift} $X$ is a subset of $S^\N$ defined by some set $F$ of forbidden words:
  \begin{equation*}
    X = \{w \in S^\N\colon \text{no word of $F$ occurs in w}\}.
  \end{equation*}
  We denote by $\Lang[X]{n}$ the set of words of length $n$ occurring in words of $X$ and define the \emph{language}
  $\Lang{X}$ of $X$ as the set $\bigcup_{n \in \N} \Lang[X]{n}$. The subshift $X$ is uniquely defined by its language.
  The function $f$ defined by letting $f(n) = \abs{\Lang[X]{n}}$ is called the \emph{factor complexity function of $X$}
  (we assume that $X$ is known from context), and it counts the number of words of length $n$ in the language of $X$.
  We define the \emph{first difference function $\Delta$} by setting $\Delta(n) = f(n) - f(n-1)$ and $\Delta(0) = 1$.
  This function measures the growth of the factor complexity function.

  \subsection{Substitutions}
  A function $\tau\colon S^* \to S^*$ is a called a \emph{substitution} if $\tau(uv) = \tau(u)\tau(v)$ for all
  $u, v \in S^*$. In this paper, we typically select $S = \{0, 1, \ldots, \abs{S} - 1\}$. If $\tau(s)$ has the same
  length for every $s \in S$, then we say that $\tau$ is \emph{uniform}. In this paper, we assume that for uniform
  substitutions we have $\abs{\tau(s)} \geq 2$ for all $s \in S$. We call the images of letters, the words $\tau(s)$,
  \emph{$\tau$-images}. If $\tau(s)$ begins with $s$ and $\lim_{n\to\infty} \abs{\tau^n(s)} = \infty$ for a letter $s$,
  then the infinite word obtained by repeatedly applying $\tau$ to $s$, denoted by $\tau^\omega(s)$, is a \emph{fixed
  point} of the substitution $\tau$. Consider the language $\mathcal{L}$ defined as the set
  \begin{equation*}
    \bigcup_{s \in S} \{w \in S^*\colon \text{$w$ occurs in $\tau^n(s)$ for some $n \geq 0$}\}
  \end{equation*}
  consisting of the factors of the words obtainable by applying $\tau$ repeatedly to the letters of $S$. Let
  \begin{equation*}
    \Lang{\tau} = \{w \in \mathcal{L}\colon \text{there exists arbitrarily long words $u$ and $v$ such that $uwv \in \mathcal{L}$}\}.
  \end{equation*}
  The subshift generated by $\tau$ is simply the subshift with the language $\Lang{\tau}$ (i.e., we forbid the
  complement of $\Lang{\tau}$). The substitution $\tau$ is \emph{primitive} if there is an integer $n$ such that
  $\tau^n(s)$ contains all letters of $S$ for every $s \in S$. The substitution $\tau$ is \emph{aperiodic} if the
  subshift generated by $\tau$ does not contain ultimately periodic infinite words. We assume that all substitutions
  considered are aperiodic.
  
  We call a substitution $\tau$ \emph{left-marked} if all of its $\tau$-images begin with distinct letters. In other
  words, there exists a permutation $\pi\colon S \to S$ such that $\tau(k) = \pi(k) w_k$ for $k \in S$. Analogously we
  define \emph{right-marked} substitutions. If a substitution is left-marked and right-marked, then it is simply called
  a \emph{marked} substitution. Observe also that marked substitutions have an obvious but important property: if a
  single letter of a $\tau$-image is changed, then the resulting word is no longer a valid $\tau$-image. A substitution
  is \emph{permutive} if there exists permutations $\pi_1$, $\pi_2$, $\ldots$, $\pi_M$ from $S$ to $S$ such that
  $\tau(k) = \pi_1(k)\pi_2(k) \cdots \pi_M(k)$ for $k \in S$. A permutive substitution is uniform and marked.

  We say that a word $w$ in $\Lang{\tau}$ admits an interpretation $(a_0 \cdots a_{n+1}, i, j)$ for letters
  $a_0$, $\ldots$, $a_{n+1}$ by $\tau$ if $w = \del_{i,j}(\tau(a_0 \cdots a_{n+1}))$, $0 \leq i < \abs{\tau(a_0)}$,
  $0 \leq j < \abs{\tau(a_{n+1})}$, and $a_0 \cdots a_{n+1} \in \Lang{\tau}$. The word $a_0 \cdots a_{n+1}$ is called
  an \emph{ancestor} of the word $w$. We say that $(u_1, u_2)$ is a \emph{synchronization point} of $w$ (for $\tau$)
  if $w = u_1 u_2$ and whenever $v_1 w v_2 = \tau(z)$ for some $z \in \Lang{\tau}$ and some words $v_1$ and $v_2$, then
  $v_1 u_1 = \tau(t_1)$ and $u_2 v_2 = \tau(t_2)$ for some words $t_1$ and $t_2$ such that $z = t_1 t_2$. We say that
  $\tau$ has \emph{synchronization delay} $L$ if every word in $\Lang{\tau}$ of length at least $L$ has at least one
  synchronization point and $L$ is minimal. Observe that if $\tau$ is marked, then all words in $\Lang{\tau}$ of length
  at least $L$ have a unique ancestor. We assume that all substitutions considered in this paper have a synchronization
  delay. It follows from a theorem of Mossé
  \cite[Corollaire~3.2.]{1992:puissances_de_mots_et_reconnaissabilitie_des_points_fixes} that the synchronization delay
  of a uniform, primitive, and aperiodic substitution always exists.\footnote{Mossé's Theorem applies to any primitive and aperiodic substitution.}

  Let $\tau$ be a uniform substitution of length $M$ with synchronization delay $L$. Let $w$ in $\Lang{\tau}$ be a word
  such that $\abs{w} \geq L$. Suppose that $w$ has an ancestor $z$, so that $w = \del_{i,j}(\tau(z))$ with
  $0 \leq i, j < M$. While $w$ might have several ancestors, the uniformity of $\tau$ and the fact that $w$ has at
  least one synchronization point ensure that the numbers $i$ and $j$ are independent of the chosen ancestor $z$. In
  fact, the positions $i$ and $j$ mark a synchronization point of $w$. All in all, the number $i$ uniquely identifies
  the positions of $w$ where the $\tau$-images of the letters of any ancestor of $w$ begin at, and we say that $w$ has
  \emph{decomposition $i \bmod{M}$}.

  \subsection{Word Games}
  Next we define precisely the word game in which two players, Alice and Bob, build a finite or infinite word. A
  \emph{word game} is a quadruple $(S, n, X, \alpha)$, where $S$ is an alphabet, $n \in \N \cup \{\N\}$, the
  \emph{target set} $X$ is a subset of $S^n$, and the \emph{choice sequence} $\alpha$ is a word of length $n$ (an
  infinite word if $n = \N$) over the alphabet $\{1,2,\ldots,\abs{S}\}$. We may allow the target set $X$ to contain
  words of distinct lengths by using $X \cap S^n$ in place of $X$; this will always be clear from context.
  
  Denote by $G$ the word game $(S, n, X, \alpha)$ with $n \in \N$, and write $\alpha = \alpha_1 \cdots \alpha_n$ for
  letters $\alpha_i$. During the round $i$, $1 \leq i \leq n$, of this game, first Alice chooses a subset $S_i$ of $S$
  of size $\alpha_i$. Then Bob picks a letter $a_i$ from the set $S_i$. After $n$ rounds, Alice and Bob have together
  built a word $a_1 a_2 \cdots a_n$. If $a_1 a_2 \cdots a_n \in X$, then Alice wins the game $G$ and otherwise Bob
  does. An example is provided at the beginning of \autoref{sec:tm_example}, and more examples are found in
  \cite{2014:playing_with_subshifts}. The notions presented in this paragraph extend to the case $n = \N$ in a natural
  way.
  
  Alice's \emph{strategy} for $G$ is a function $s\colon S^{\leq i} \to 2^S$ that specifies which subset of size
  $\alpha_{i+1}$ she should choose next given the word of length $i$ constructed so far. Similarly we define Bob's
  strategy as a partial function $s\colon S^{\leq i} \times 2^S \to S$ specifying which letter Bob should pick given
  the word constructed so far and the subset chosen by Alice. Let $s_A$ and $s_B$ respectively be Alice's strategy and
  Bob's strategy for the game $G$. The \emph{play} $p(G, s_A, s_B)$ of the strategy pair $(s_A, s_B)$ is the word
  $a_1 a_2 \cdots a_n$ defined inductively by $a_{i+1} = s_B(a_1 \cdots a_i, s_A(a_1 \cdots a_i))$ with
  $a_1 \cdots a_0 = \varepsilon$ (if $n = \N$, then the play $a_1 a_2 \cdots$ is simply infinite). We say that Alice's
  strategy $s$ is \emph{winning} if $p(G, s, s_B) \in X$ for all Bob's strategies $s_B$ (Alice wins no matter how Bob
  plays). Analogously Bob's strategy $s$ is winning if $p(G, s_A, s) \notin X$ for all Alice's strategies $s_A$. If
  $n \in \N$ or $X$ is a closed set in the product topology of $S^\N$ (in particular, if $X$ is a subshift), then a
  winning strategy always exists for one of the players \cite{1953:infinite_games_with_perfect_information}. In this
  paper, we consider Bob's strategies only indirectly. Thus whenever we talk about a winning strategy we mean that it
  is Alice's winning strategy. Similarly by a \emph{winning play} we mean a play by a strategy pair $(s_A, s_B)$ where
  $s_A$ is Alice's winning strategy.
  
  As mentioned in the introduction, we are interested in the choice sequences for which Alice has a winning strategy.
  Given a subset $X$ of $S^n$, where $n \in \N \cup \{\N\}$, we define the \emph{winning set} $W(X)$ of $X$ as the set
  \begin{equation*}
    \{\alpha \in \{1,\ldots,\abs{S}\}^n \colon \text{Alice has a winning strategy for the word game $(S, n, X, \alpha)$}\}.
  \end{equation*}
  Notice that in general Alice has several winning strategies for a choice sequence in $W(X)$ We often omit the
  alphabet $S$, it will be clear from the context. For a language $X \subseteq S^*$, we set
  \begin{equation*}
    W(X) = \bigcup_{n \in \N} W(X \cap S^n)
  \end{equation*}
  and call also this set the winning set of $X$. If $n = \N$ and $X$ is a subshift, then we call $W(X)$ the
  \emph{winning shift} of $X$; if the subshift $X$ is generated by a substitution $\tau$, then we denote its winning
  shift by $W(\tau)$. Indeed, in \cite[Proposition~3.4]{2014:playing_with_subshifts}, the following result was obtained.

  \begin{proposition}
    If $X$ is a subshift, then $W(X)$ is a subshift and $\Lang{W(X)} = W(\Lang{X})$.
  \end{proposition}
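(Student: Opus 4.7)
The plan is to establish the language identity $\Lang{W(X)} = W(\Lang{X})$ first and then to recognize $W(X)$ as the set of sequences all of whose prefixes lie in $W(\Lang{X})$, which is automatically a subshift.

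For the inclusion $W(\Lang{X}) \subseteq \Lang{W(X)}$, I would take $\alpha \in W(\Lang{X})$ and consider the extension $\beta = \alpha\,1\,1\,1\cdots$. For the first $\abs{\alpha}$ rounds Alice follows her winning strategy for the finite game with target $\Lang{X} \cap S^{\abs{\alpha}}$, producing some word $v \in \Lang{X}$ regardless of Bob's replies. Since $X$ is shift-invariant, $v$ is a prefix of some infinite word $vw \in X$; for every subsequent round Alice picks a singleton subset, forcing Bob to spell out $w$ letter by letter. The resulting infinite play is $vw \in X$, so $\beta \in W(X)$, whence $\alpha \in \Lang{W(X)}$.

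For the reverse inclusion $\Lang{W(X)} \subseteq W(\Lang{X})$, I would first observe that $W(X)$ is shift-invariant: given Alice's winning strategy $s$ for $\beta$, fix any letter $c$ in the subset $s(\varepsilon)$ of size $\beta_1$, and define a strategy for the choice sequence $\beta_2\beta_3\cdots$ by pretending Bob played $c$ in a virtual first round and invoking $s$ thereafter; the resulting infinite play prepended with $c$ lies in $X$, so by shift-invariance of $X$ the play itself lies in $X$. Any factor of a word in $W(X)$ is therefore a prefix of some word in $W(X)$, so it suffices to treat a prefix $\alpha$ of $\beta \in W(X)$. Playing $s$ for the first $\abs{\alpha}$ rounds yields subsets of the correct sizes $\alpha_1,\ldots,\alpha_{\abs{\alpha}}$ and, against any Bob, an initial segment $v$ of an infinite play in $X$; factor-closedness of $\Lang{X}$ gives $v \in \Lang{X}$, and this truncation of $s$ witnesses $\alpha \in W(\Lang{X})$.

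To finish, I use the identity to rewrite
\begin{equation*}
  W(X) = \bigl\{\beta \in \{1,\ldots,\abs{S}\}^\N : \text{every prefix of $\beta$ lies in } W(\Lang{X})\bigr\},
\end{equation*}
a shift-invariant closed set, hence a subshift. The nontrivial inclusion $\supseteq$ is the main obstacle and is handled by a compactness argument. For each $n$, the set $A_n$ of Alice-strategies winning the $n$-round game with choice sequence $\beta_1\cdots\beta_n$ and target $\Lang{X} \cap S^n$ is nonempty and finite; since $\Lang{X}$ is prefix-closed, truncating a strategy in $A_{n+1}$ to its first $n$ rounds yields a strategy in $A_n$. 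König's lemma on the finitely-branching tree of coherent families produces an infinite sequence $(s_n)$ with each $s_{n+1}$ extending $s_n$, and these glue into a single strategy $s$ for the infinite game. Against every Bob-strategy the play under $s$ has all finite prefixes in $\Lang{X}$; since $X$ is closed in $S^\N$, the entire play lies in $X$, so $s$ wins and $\beta \in W(X)$.
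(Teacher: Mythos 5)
The paper does not actually prove this proposition: it is imported as \cite[Proposition~3.4]{2014:playing_with_subshifts}, so there is no in-text argument to compare yours against. Judged on its own, your proof is correct and self-contained. The two language inclusions are handled by the standard strategy manipulations: extending a finite winning strategy by singleton choices along an infinite continuation inside $X$ (valid because every word of $\Lang{X}$ is a prefix of some element of $X$, by shift-invariance), and conversely truncating an infinite winning strategy after first reducing factors to prefixes via the ``virtual first round'' argument for shift-invariance of $W(X)$. The genuinely nontrivial point is that a choice sequence all of whose prefixes are finitely winnable is winnable in the infinite game, and your K\H{o}nig's lemma gluing handles it properly: the levels $A_n$ are finite and nonempty, truncation maps $A_{n+1}$ into $A_n$ because $\Lang{X}$ is prefix-closed, and the limit strategy wins because a one-sided infinite word all of whose prefixes lie in $\Lang{X}$ belongs to the closed, shift-invariant set $X$. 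In effect you give a direct proof of the Alice side of closed-game determinacy, rather than invoking the Gale--Stewart result the paper cites as a black box. One small remark: the shift-invariance of the displayed set of $\beta$ with all prefixes in $W(\Lang{X})$ is not immediate from its definition alone, but it follows from its equality with $W(X)$, whose shift-invariance you proved directly, so the conclusion stands.
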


  We abuse notation and write $W(X)$ for $\Lang{W(X)}$, it is always clear from context whether we consider finite
  words or infinite words. In addition, we have the following observation.

  \begin{lemma}\label{lem:inclusion}
    Let $X$ and $Y$ be sets containing words of equal length. If $X \subseteq Y$, then $W(X) \subseteq W(Y)$.
  \end{lemma}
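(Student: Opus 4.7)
The plan is to observe that the notion of a strategy for Alice depends only on the alphabet $S$, the length $n$, and the choice sequence $\alpha$, and not on the target set. In particular, for any $\alpha \in \{1,\dots,\abs{S}\}^n$ the games $G_X = (S, n, X, \alpha)$ and $G_Y = (S, n, Y, \alpha)$ admit exactly the same strategies for Alice and for Bob, and produce identical plays for each strategy pair. Only the winning condition differs.

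Given this, I would simply take $\alpha \in W(X)$ and unfold the definition: there exists an Alice strategy $s_A$ such that for every Bob strategy $s_B$ the play $p(G_X, s_A, s_B)$ lies in $X$. Since $X \subseteq Y$, this same play lies in $Y$, so $s_A$ is also a winning strategy for Alice in $G_Y$. Therefore $\alpha \in W(Y)$.

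For the language version $W(X) = \bigcup_n W(X \cap S^n)$, the same argument applies levelwise: if $X \subseteq Y$ then $X \cap S^n \subseteq Y \cap S^n$ for each $n$, so $W(X \cap S^n) \subseteq W(Y \cap S^n)$ by the finite-length case, and taking unions gives the claim. There is essentially no obstacle here; the lemma is a pure monotonicity statement, and the only thing one has to be careful about is that $X$ and $Y$ consist of words of equal (common) length so that the game $(S, n, \cdot, \alpha)$ is well defined for both with the same parameters, which is exactly the hypothesis of the lemma.
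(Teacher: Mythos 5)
Your proof is correct and follows essentially the same route as the paper: the paper's one-line argument is exactly that Alice's winning strategy for target set $X$ already suffices for target set $Y$, which is what you unfold in detail. The levelwise remark for the language version is a harmless elaboration that the paper leaves implicit.
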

  \begin{proof}
    Alice's winning strategy for a word game with target set $X$ and choice sequence in $W(X)$ is sufficient as it is
    for her to win in the game with the same choice sequence and target set $Y$.
  \end{proof}

  We endow the alphabet $\{1,\ldots,\abs{S}\}$ with the natural order $1 < 2 < \ldots < \abs{S}$. Suppose that $u$ and
  $v$ are words over this alphabet (finite or infinite), and write $u = u_0 \cdots u_{n-1}$ and
  $v = v_0 \cdots v_{m-1}$ for letters $u_i$, $v_i$. Then we write $u \leq v$ if and only if $n = m$ and $u_i \leq v_i$
  for $i = 0, \ldots, n - 1$. The winning set $W(X)$ is \emph{downward closed} with respect to this partial ordering:
  if $u \leq v$ and $v \in W(X)$, then $u \in W(X)$. This is simply because \emph{downgrading} a letter from the choice
  sequence only makes Bob's chances of winning slimmer.

  Observe that the winning strategies for finite choice sequences ending with the letter $1$ are just trivial
  extensions of winning strategies of shorter choice sequences ending with a letter greater than $1$. Thus we say that
  a finite choice sequence is \emph{reducible} if it ends with $1$ and \emph{irreducible} otherwise. The infinite words
  of the winning shift $W(X)$ are obtainable from irreducible choice sequences by appending infinitely many letters $1$
  and by taking closure. A rule of thumb for the rest of the paper is that to describe the structure of the winning
  sets it is enough to study only irreducible choice sequences.

  Finally, we need the next proposition \cite[Proposition~5.7]{2014:playing_with_subshifts} that motivates the
  presented results.

  \begin{proposition}\label{prp:cardinality}
    If $n \in \N$ and $X \subseteq S^n$, then $\abs{W(X)} = \abs{X}$.
  \end{proposition}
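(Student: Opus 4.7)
The plan is to prove this by induction on $n$, counting $W(X)$ via a first-letter decomposition of both $X$ and the winning strategies.

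For the base case $n = 1$, a choice sequence is a single letter $\alpha_1 \in \{1,\ldots,\abs{S}\}$. Alice wins iff she can choose a subset $S_1 \subseteq X$ of size $\alpha_1$, which is possible exactly when $\alpha_1 \leq \abs{X}$. Hence $W(X) = \{1,2,\ldots,\abs{X}\}$, so $\abs{W(X)} = \abs{X}$.

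For the inductive step with $n \geq 2$, decompose by first letter: for each $s \in S$, let $X_s = \{w \in S^{n-1} \colon sw \in X\}$, so that $\abs{X} = \sum_{s \in S}\abs{X_s}$. I would then unpack the definition of winning: for $\alpha_1 \in \{1,\ldots,\abs{S}\}$ and $\beta \in \{1,\ldots,\abs{S}\}^{n-1}$, the choice sequence $\alpha_1 \beta$ lies in $W(X)$ iff Alice can choose some subset $S_1 \subseteq S$ of size $\alpha_1$ such that, no matter which $s \in S_1$ Bob picks, she still wins the remaining $n-1$ rounds with target $X_s$ and choice sequence $\beta$. That is, $\alpha_1 \beta \in W(X)$ iff $\alpha_1 \leq N(\beta)$, where $N(\beta) = \abs{\{s \in S \colon \beta \in W(X_s)\}}$.

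Given this equivalence, the number of valid first letters $\alpha_1$ extending a fixed $\beta$ into $W(X)$ is exactly $N(\beta)$ (and $0$ when $N(\beta) = 0$, which the formula also handles). Summing and swapping the order of summation gives
\begin{equation*}
  \abs{W(X)} \;=\; \sum_{\beta}\, N(\beta) \;=\; \sum_{\beta}\,\abs{\{s \in S \colon \beta \in W(X_s)\}} \;=\; \sum_{s \in S} \abs{W(X_s)}.
\end{equation*}
Applying the induction hypothesis to each $X_s \subseteq S^{n-1}$ gives $\abs{W(X_s)} = \abs{X_s}$, and summing recovers $\abs{X}$. The only subtle point — and the main obstacle to a fully formal write-up — is verifying the equivalence between the game-theoretic condition $\alpha_1 \beta \in W(X)$ and the combinatorial condition $\alpha_1 \leq N(\beta)$; this requires showing that an optimal first move for Alice is to pick a size-$\alpha_1$ subset of the \emph{maximal} winning set $\{s \in S \colon \beta \in W(X_s)\}$, which follows directly from \autoref{lem:inclusion} applied round by round to the residual games.
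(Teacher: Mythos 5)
Your proof is correct, and it is worth noting that the paper itself does not prove this proposition at all: it is imported from Proposition~5.7 of \cite{2014:playing_with_subshifts} (with a pointer to the equivalent set-system result of \cite{2002:shattering_news}), and the first-letter induction you give is essentially the standard argument behind that citation, the entire content being the recursion $\abs{W(X)} = \sum_{s \in S} \abs{W(X_s)}$. One small remark: the equivalence $\alpha_1\beta \in W(X) \iff \alpha_1 \leq N(\beta)$, which you single out as the delicate step, does not need \autoref{lem:inclusion} or any optimality argument. If $\alpha_1 \leq N(\beta)$, Alice offers any $\alpha_1$-element subset of $\{s \in S \colon \beta \in W(X_s)\}$ and then follows the residual winning strategies; conversely, if $s_A$ is a winning strategy for $\alpha_1\beta$, then for each $s \in s_A(\varepsilon)$ the restriction of $s_A$ to plays beginning with $s$ is a winning strategy for the game $(S, n-1, X_s, \beta)$, so $s_A(\varepsilon) \subseteq \{s \in S \colon \beta \in W(X_s)\}$ and $\alpha_1 \leq N(\beta)$. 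With that observation your sketch is already a complete proof.
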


  We note that a subset $W$ of $\{0,1\}^n$ can be interpreted as a family of subsets of $\{1,2,\ldots,n\}$ (a
  so-called set system) by considering a word $w \in \{0,1\}^n$ as the characteristic function of a subset.
  \autoref{prp:cardinality} has been proven in relation to set systems in
  \cite{2002:shattering_news}.\footnote{Formally, the result of \cite{2002:shattering_news} corresponds to the binary case of \autoref{prp:cardinality}. Their \emph{order-shattered sets} for the set system whose characteristic functions are $X \subseteq \{0,1\}^n$ correspond to the choice sequences in $W(X^R)^R$, where $R$ is word reversal, that is, their games are played from right to left.}

  \section{The Motivating Example of the Thue-Morse Substitution}\label{sec:tm_example}
  In this section, we consider the winning shift of the Thue-Morse substitution. Through examples, we
  describe the substitutive structure of this winning shift and outline how it can be used to compute the factor
  complexity of the subshift generated by the Thue-Morse substitution. Our claims are rigorously derived in the
  subsequent sections in a more general setting.

  \begin{table}
    \centering
    \begin{tabular}{|c|l||c|l||c|l|}
      \hline
      $n$ & & $n$ & & $n$ & \\
      \hline
      $1$ & $\diamond$         & $9$ & $\diamond 11111112$         & $17$ & $\diamond 1111111111111112$ \\
      \hline
      $2$ & $\diamond 2$       & $10$ & $\diamond 111111112$       & $18$ & $\diamond 11111111111111112$ \\
          &                    &      & $\diamond 211111112$       &      & $\diamond 21111111111111112$ \\
      \hline
      $3$ & $\diamond 12$      & $11$ & $\diamond 1111111112$      & $19$ & $\diamond 111111111111111112$ \\
          &                    &      & $\diamond 1211111112$      &      & $\diamond 121111111111111112$ \\
      \hline
      $4$ & $\diamond 112$     & $12$ & $\diamond 11111111112$     & $20$ & $\diamond 1111111111111111112$ \\
          & $\diamond 212$     &      & $\diamond 11211111112$     &      & $\diamond 1121111111111111112$ \\
      \hline
      $5$ & $\diamond 1112$    & $13$ & $\diamond 111111111112$    & $21$ & $\diamond 11111111111111111112$ \\
          &                    &      & $\diamond 111211111112$    &      & $\diamond 11121111111111111112$ \\
      \hline
      $6$ & $\diamond 11112$   & $14$ & $\diamond 1111111111112$   & $22$ & $\diamond 111111111111111111112$ \\
          & $\diamond 21112$   &      &                            &      & $\diamond 111121111111111111112$ \\
      \hline
      $7$ & $\diamond 111112$  & $15$ & $\diamond 11111111111112$  & $23$ & $\diamond 1111111111111111111112$ \\
          & $\diamond 121112$  &      &                            &      & $\diamond 1111121111111111111112$ \\
      \hline
      $8$ & $\diamond 1111112$ & $16$ & $\diamond 111111111111112$ & $24$ & $\diamond 11111111111111111111112$ \\
          &                    &      &                            &      & $\diamond 11111121111111111111112$ \\
      \hline
    \end{tabular}
    \caption{The irreducible choice sequences of the winning shift of the Thue-Morse substitution for lengths $1$ to
    $24$. The letter $\diamond$ can be substituted by both of the letters $1$ and $2$.}
    \label{tbl:thue_morse}
  \end{table}

  Let $\tau$ be the Thue-Morse substitution: $\tau(0) = 01$, $\tau(1) = 10$. The substitution $\tau$ is uniform,
  primitive, and marked, and it is readily proven that it is aperiodic. With an exhaustive search, it is easily
  established that its synchronization delay is $4$ (see also \autoref{lem:synch_delay}). The fixed point
  \begin{equation*}
    \tau^\omega(0) = 01101001100101101001011001101001100101100110100101101001 \cdots
  \end{equation*}
  is the famous Thue-Morse word, which is overlap-free (i.e., it does not contain a factor of the form
  $auaua$ for a word $u$ and a letter $a$). For more details on the substitution $\tau$, see for example
  \cite[Section~2.2]{1983:combinatorics_on_words}.
  
  In \autoref{tbl:thue_morse}, we list irreducible choice sequences of $W(\tau)$ for lengths $1$ to
  $24$.\footnote{Here we indeed abuse notation, and we should write $\Lang{W(\tau)}$ for $W(\tau)$. Remember that reducible choice sequences of length $n$ are obtained by padding shorter irreducible choice sequences with the letter $1$.}
  For the choice sequence $2212$, Alice has the following winning strategy:
  \begin{align*}
    \varepsilon        &\mapsto \{0, 1\}, \\
    0, 1               &\mapsto \{0, 1\}, \\
    00, 10             &\mapsto \{1\}, \\
    01, 11             &\mapsto \{0\}, \\
    001, 101, 010, 110 &\mapsto \{0, 1\},
  \end{align*}
  the other arguments being irrelevant. This strategy is depicted in \autoref{fig:thue-morse_strategy} as a
  \emph{strategy tree}; this tree representation is used throughout this paper. Whenever Alice has more than one choice
  according to her strategy, the tree branches to several nodes that correspond to Alice's possible choices of letters.
  We omit edges from the tree when there are no branchings.

  \begin{figure}
    \centering
    \begin{minipage}{.5\textwidth}
      \centering
      \begin{tikzpicture}
        \tikzset{level distance=40pt}
        \Tree [.$\varepsilon$ [.$0$ [.$01$ [.$0$ ] [.$1$ ] ] [.$10$ [.$0$ ] [.$1$ ] ] ] [.$1$ [.$01$ [.$0$ ] [.$1$ ] ] [.$10$ [.$0$ ] [.$1$ ] ] ] ]
      \end{tikzpicture}
    \end{minipage}%
    \begin{minipage}{.5\textwidth}
      \centering
      \begin{tikzpicture}
        \tikzset{level distance=40pt}
        \tikzset{level 3/.style={level distance=50pt}}
        \Tree [.$\varepsilon$ [.$01$ [.$0110$ [.$01$ ] [.$10$ ] ] [.$1001$ [.$01$ ] [.$10$ ] ] ] [.$10$ [.$0110$ [.$01$ ] [.$10$ ] ] [.$1001$ [.$01$ ] [.$10$ ] ] ] ]
      \end{tikzpicture}
    \end{minipage}
    \caption{Winning strategies for Alice for the choice sequences $2212$ and $21211121$ in the case of the Thue-Morse
    substitution.}\label{fig:thue-morse_strategy}
  \end{figure}
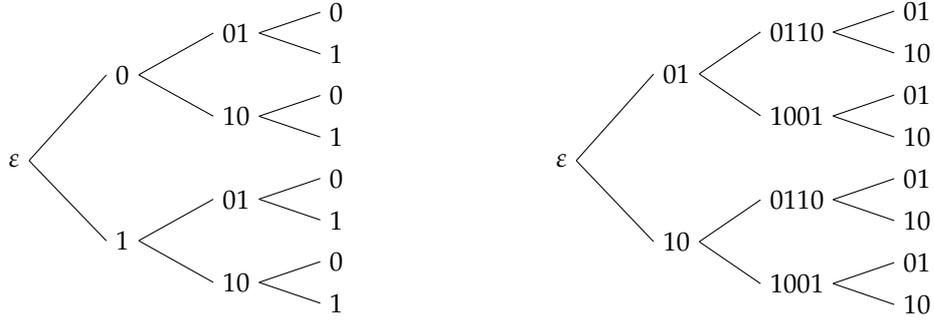

  \autoref{tbl:thue_morse} contains many patterns. By \autoref{prp:cardinality}, the number of irreducible choice
  sequences of length $n$ is counted by the first difference function $\Delta(n)$. Based on the data, it seems that
  $\Delta(n) \in \{2, 4\}$ for all $n \geq 1$ and $\Delta(n) = 4$ only if $n = 2^k + \ell + 1$ for $k \geq 1$ and
  $1 \leq \ell \leq 2^{k-1}$. This is of course readily observed when looking at the factor complexity function; here
  we see much more: the rule described next confirms the preceding observations.

  We observe that a choice sequence $\alpha$ in the winning shift always seems to contain at most three occurrences of
  $2$. Moreover, if $\alpha$ contains exactly three occurrences of $2$, then the distance between the two final
  occurrences is $2^k - 1$ for some $k \geq 1$, and the middle occurrence is preceded by at most $2^{k-1}$ occurrences
  of the letter $1$. The rule seems to be the following. If $n = 3 \cdot 2^k + 2$, then the only irreducible choice
  sequence of length $n$ (up to the difference at the very beginning) is $\smash[t]{\diamond 1^{3 \cdot 2^k} 2}$. Then
  the number of $1$s increases until there are $2^{k+2} - 1$ of them. Next a third occurrence of $2$ can be introduced:
  the choice sequences of length $2^{k+2} + 2$ are $\diamond 2 1^{2^{k+2} - 1} 2$ and $\diamond 1^{2^{k+2}} 2$ (the
  former choice sequence downgraded). Then the number of $1$s before the second to last occurrence of $2$ starts to
  grow one by one until the choice sequences considered are of length $3 \cdot 2^{k+1} + 1$, and then the pattern
  repeats. The observed rule suggests that irreducible choice sequences of $W(\tau)$ of lengths $2^k + 2$ to
  $3 \cdot 2^k + 1$ are related to irreducible choice sequences of lengths $2^{k+1} + 2$ to $3 \cdot 2^{k+1} + 1$.
  Indeed, these choice sequences look identical: the latter ones are just ``blown up'' by a factor of $2$. Since the
  substitution $\tau$ also ``blows up'' words by a factor of $2$, we proceed to look at $\tau$-images of the strategy
  trees of short choice sequences.

  Consider the strategy tree for the choice sequence $2212$ depicted in \autoref{fig:thue-morse_strategy}. Substitute
  all letters of this tree with $\tau$ while preserving the branch structure to obtain the right tree of
  \autoref{fig:thue-morse_strategy}. The obtained strategy tree gives a winning strategy for Alice in a word game with
  choice sequence $21211121$. Let us next give an intuitive explanation for the strategy from Alice's point of view.
  Alice can beat Bob in the word game with choice sequence $21211121$ by imagining that she plays the word game with
  choice sequence $2212$, for which she has a winning strategy. On her first turn, Alice lets Bob choose between $0$
  and $1$. Since Alice wins this game of length $1$, Alice can also win the game of length $2$ with choice sequence
  $21$ played on the $\tau$-images $\tau(0)$ and $\tau(1)$ (choice sequence $11$ is also possible but less
  interesting). Continuing, Alice lets Bob again choose between $0$ and $1$. The win on this play of length $2$ ensures
  Alice winning the game of length $4$ with choice sequence $2121$ played on the $\tau$-images $\tau(00)$, $\tau(01)$,
  $\tau(10)$, $\tau(11)$. Next, Alice gives Bob only one choice to ensure a win, so Bob, having no options, loses in
  the game of length $6$ with choice sequence $212111$ played on the respective $\tau$-images. Overall, we see that the
  short winning strategy for the choice sequence $2212$ enables Alice to always win the game with choice sequence
  $21211121$. This longer choice sequence is constructed in such a way that all occasions of Bob having a real choice
  (branches of the strategy tree) correspond to Bob having a choice of two letters in the shorter game with choice
  sequence $2212$; Alice just imagines playing a short game with choice sequence $2212$ filling the suffixes of the
  $\tau$-images by not letting Bob choose. Alice's method can indeed be viewed as a branch-preserving substitution of
  the strategy tree.

  The method described above does not explain if it is possible for Alice to obtain a winning strategy for, e.g., the
  choice sequence $2211121$ from some shorter winning strategy. Let us see how she could do this. Alice again imagines
  playing the winning strategy of the word game with choice sequence $2212$ using her winning strategy of
  \autoref{fig:thue-morse_strategy}. Now, however, during the first turn Alice lets Bob pick a suffix of length $1$ of
  the $\tau$-images of the letters $0$ and $1$ (which Bob is allowed to play on the first turn of the shorter game).
  Continuing as above, the played word will be a suffix of a word played in the word game with choice sequence
  $21211121$ and a suffix of a $\tau$-image of a word played in the word game with choice sequence $2212$. Therefore
  also $2211121 \in W(\tau)$. Similarly the play on the $\tau$-images does not have to complete the final image, the
  play can be restricted to a proper prefix of the $\tau$-images. In this particular case of the Thue-Morse
  substitution, it is easy to be convinced that all long enough winning strategies are obtainable by substitution by
  working out some example desubstitutions on strategy trees.

  In the next section, we will prove that the above methods always produce longer winning strategies from short
  winning strategies, even in the case of a general uniform substitution. We will show that not all long enough
  winning strategies are necessarily obtainable from short ones by substitution, but we will show that this holds for
  marked uniform substitutions. In essence, Alice can derive winning strategies for all long enough choice sequences in
  $W(\tau)$ from a few core strategies. Moreover, we are able to deduce the first difference function of a marked
  uniform substitution, which makes it possible to derive a formula for the factor complexity function.
  
  Knowing that winning strategies are obtained by substitution is not enough to give a complete description of the
  winning shift $W(\tau)$. There is typically some ambiguity on short prefixes of words in $W(\tau)$ due to the fact
  that they are related to the winning sets of word games played on suffixes of $\tau$-images. The winning sets of
  proper suffixes of $\tau$-images of a marked substitution can be very complicated---nothing general can be stated
  about their form. Thus at the end of \autoref{sec:main_results}, we introduce additional assumptions that simplify
  these winning sets. We show that the winning sets of proper suffixes of the $\tau$-images of permutive uniform
  substitutions are trivial, so that $W(\tau)$ admits a complete description. In this case, it can be shown that also
  the winning shift $W(\tau)$, not only the winning strategies, has a substitutive structure.

  Let us conclude this section by describing the substitutive structure of $W(\tau)$ in our example case of the
  Thue-Morse substitution. Let $\sigma$ be a substitution defined by $\sigma(1) = 11$ and $\sigma(2) = 21$, and let
  $\diamond w 2$ be an irreducible choice sequence in $W(\tau)$ for a letter $\diamond$. The result is that the words
  $\diamond \sigma(w) 2$ and $\sigma(\diamond w) 2$ are in $W(\tau)$ and that all irreducible choice sequences of
  length at least $5$ are obtained in this manner. Thus in our particular example it is sufficient to know all
  irreducible choice sequences of $W(\tau)$ of length at most $4$ to completely describe $W(\tau)$.
  
  \section{Main Results}\label{sec:main_results}
  In general, for a uniform substitution $\tau\colon S^* \to S^*$, substituting short winning strategies yields longer
  winning strategies in a manner similar to what was outlined in the previous section. To figure out the longer
  choice sequence obtained from a substituted short winning strategy, we need to identify the positions of the
  $\tau$-images $\tau(A)$ of a subset $A$ of $S$ where Bob can make choices without compromising the chances of Alice
  winning; in other words, we need to identify the winning set of $\tau(A)$. Notice that in general we obtain many
  possible choice sequences as the winning set of $\tau(A)$ might contain several words. We also want to consider the
  winning sets of prefixes and suffixes of these $\tau$-images since we want to include plays where in the beginning
  Bob plays a proper suffix of a $\tau$-image of a letter and in the end he plays a proper prefix of a $\tau$-image of
  a letter, just like in the examples of the previous section. Throughout this section, we assume that $\tau\colon S^*
  \to S^*$ is a uniform and aperiodic substitution of length $M$ with synchronization delay $L$.
  
  Before formalizing the ideas in the following lemma, we introduce some notation. Let $s$ be Alice's strategy for a
  word game $G$. We define its \emph{language} $\Lang{s}$ to consist of all possible plays with this strategy, that is,
  it is the set containing all words $p(G, s, s_B)$ for Bob's strategies $s_B$. Here, we let $\Lang[s]{n}$ denote the
  set $\pref_n(\Lang{s})$, that is, $\Lang[s]{n}$ contains the words that are playable in $n$ rounds when Alice uses
  the strategy $s$.

  \begin{lemma}\label{lem:short_to_long}
    Let $s$ be Alice's winning strategy for a word game $(S, n, \Lang{\tau}, \alpha)$ with $n \geq 2$. Then
    \begin{equation*}
      W(\suff_i(\tau(s(\varepsilon))))
      \cdot \prod_{k = 1}^{n - 2} \bigcap_{a \in \Lang[s]{k}} W(\tau(s(a)))
      \cdot \bigcap_{a \in \Lang[s]{n-1}} W(\pref_j(\tau(s(a)))) \subseteq W(\tau)
    \end{equation*}
    for all integers $i$ and $j$ such that $1 \leq i,j \leq M$.
  \end{lemma}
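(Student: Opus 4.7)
The plan is a simulation argument: given a choice sequence $\gamma$ lying in the concatenated product, factor it as $\gamma = \gamma_0 \gamma_1 \cdots \gamma_{n-1}$, where $\gamma_0 \in W(\suff_i(\tau(s(\varepsilon))))$, each $\gamma_k \in \bigcap_{a \in \Lang[s]{k}} W(\tau(s(a)))$ for $1 \leq k \leq n-2$, and $\gamma_{n-1} \in \bigcap_{a \in \Lang[s]{n-1}} W(\pref_j(\tau(s(a))))$. I will build Alice's winning strategy $s'$ for the long game $(S, i+(n-2)M+j, \Lang{\tau}, \gamma)$ in such a way that every resulting play has the form $\del_{M-i, M-j}(\tau(a_1 a_2 \cdots a_n))$, where $a_1 \cdots a_n$ is some play of the short game in which Alice uses her winning strategy $s$.

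The long game splits into $n$ consecutive phases mirroring the factorization of $\gamma$. In the initial phase (the first $i$ rounds), Alice invokes the winning strategy witnessing $\gamma_0 \in W(\suff_i(\tau(s(\varepsilon))))$ with $s(\varepsilon)$ as the first subset she plays in her imagined short game; the phase produces a length-$i$ suffix of $\tau(a_1)$ for some $a_1 \in s(\varepsilon)$, which Alice records as the opening letter of the imagined short-game play. In phase $k$ (for $1 \leq k \leq n-2$) Alice has already recovered $a_1 \cdots a_k \in \Lang[s]{k}$; because $\gamma_k$ belongs to the intersection over all such plays, Alice may legally deploy a winning strategy for $W(\tau(s(a_1 \cdots a_k)))$, which runs for $M$ rounds and outputs exactly $\tau(a_{k+1})$ for some $a_{k+1} \in s(a_1 \cdots a_k)$. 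The final phase, of length $j$, uses $\gamma_{n-1}$ analogously and produces a length-$j$ prefix of $\tau(a_n)$ for some $a_n \in s(a_1 \cdots a_{n-1})$.

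Concatenating the outputs of all phases gives the word $\del_{M-i,M-j}(\tau(a_1 a_2 \cdots a_n))$. By construction $a_1 \cdots a_n$ is a play $p(G, s, s_B)$ of the short game against some strategy of Bob, and since $s$ is winning with target $\Lang{\tau}$ we have $a_1 \cdots a_n \in \Lang{\tau}$. Therefore $\tau(a_1 \cdots a_n)$ lies in $\Lang{\tau}$, and its factor $\del_{M-i,M-j}(\tau(a_1 \cdots a_n))$ lies in $\Lang{\tau}$ by the biextendability built into the definition of $\Lang{\tau}$. Consequently $s'$ is a winning strategy for $\gamma$ with target $\Lang{\tau}$, so $\gamma \in W(\tau)$.

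The one delicate point, and the reason the intersection appears in the middle factors, is that when Alice enters phase $k$ she does not yet know the simulated prefix $a_1 \cdots a_k$: it is determined \emph{during} the phase by Bob's responses in the earlier phases. Thus the winning strategy she is about to use must be available for every possible $a \in \Lang[s]{k}$ simultaneously, which is precisely what membership of $\gamma_k$ in the intersection guarantees. Once this is arranged, the rest is bookkeeping: verifying that each phase's output is the claimed suffix, full image, or prefix of a $\tau$-image, and stitching these outputs together into the deletion $\del_{M-i,M-j}(\tau(a_1 \cdots a_n))$.
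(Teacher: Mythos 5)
Your proposal is correct and follows essentially the same simulation argument as the paper: factor the choice sequence into an $i$-block, $(n-2)$ blocks of length $M$, and a $j$-block, and have Alice play phase by phase the winning strategies witnessing membership in the corresponding winning sets while tracking a simulated play of the short game, the intersections guaranteeing that the fixed block $\gamma_k$ works for whichever word of $\Lang[s]{k}$ is realized. The only cosmetic difference is that you verify winningness by identifying the final word as $\del_{M-i,M-j}(\tau(a_1\cdots a_n))$ at the end, whereas the paper checks round by round that no forbidden word appears; these amount to the same bookkeeping.
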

  \begin{proof}
    Let $\beta$ be in the set on the left side of the inclusion in the statement of the lemma. Notice that this set is
    indeed nonempty as the intersected sets all contain the word $1^M$ or $1^j$. We can factorize $\beta$ as
    $\beta_0 \beta_1 \cdots \beta_{n-1}$, where $\abs{\beta_0} = i$, $\abs{\beta_{n-1}} = j$, and $\abs{\beta_k} = M$
    for $1 \leq k < n - 1$. We define a strategy $s'$ for Alice for the word game $(S, i+(n-2)M+j, \Lang{\tau}, \beta)$
    as follows:
    \begin{itemize}
      \item first Alice plays according to a winning strategy for the game
            $(S, i, \suff_i(\tau(s(\varepsilon))), \beta_0)$ (such a strategy exist as $\beta_0$ was chosen to be in
            the winning set of $\suff_i(\tau(s(\varepsilon)))$);
      \item after $i + rM$ rounds have been played, Alice plays according to a winning strategy for the game
            $(S, M, \tau(s(a)), \beta_{r+1})$, where $a$ is a word in $\Lang[s]{r+1}$ such that $\tau(a)$ has the word
            of length $i + rM$ played so far as a suffix (the winning strategy exists because $\beta_{r+1}$ is in
            $W(\tau(s(a)))$ for all $a \in \Lang[s]{r+1}$);
      \item finally, after $i + (n - 2)M$ rounds, Alice plays according to a winning strategy for the game
            $(S, j, \pref_j(\tau(s(a))), \beta_{n-1})$, where $a$ is a word in $\Lang[s]{n-1}$ such that $\tau(a)$ has
            the word of length $\smash[t]{i + (n-2)M}$ played so far as a suffix (again, the winning strategy exists
            because $\beta_{n-1}$ is in $W(\pref_j(\tau(s(a))))$ for all $a \in \Lang[s]{n-1}$).
    \end{itemize}
    The described procedure clearly defines a strategy for Alice. What is left is to prove that the strategy $s'$ is a
    winning strategy for Alice in order to conclude that $\beta \in W(\tau)$.

    We show that Bob cannot produce a forbidden word during any round. During the first $i$ rounds Alice plays
    according to a winning strategy for the word game $(S, i, \suff_i(\tau(s(\varepsilon))), \beta_0)$, so a forbidden
    word cannot be produced. Suppose then that $i + rM$ rounds have been played without producing a forbidden word. The
    word played so far is a suffix of length $i + rM$ of the word $\tau(a_0 a_1 \cdots a_r)$ where
    $a_0 a_1 \cdots a_r \in \Lang{\tau}$. Alice plays next according to a winning strategy for the word game
    $(S, M, \tau(s(a_0 a_1 \cdots a_r)), \beta_{r+1})$, so the word played during the first $i + (r+1)M$ rounds is a
    suffix of length $i + (r+1)M$ of the word $\tau(a_0 a_1 \cdots a_{r+1})$ for some
    $a_{r+1} \in s(a_0 a_1 \cdots a_r)$. Since $s$ is a winning strategy, we see that
    $a_0 a_1 \cdots a_{r+1} \in \Lang{\tau}$, so also $\tau(a_0 a_1 \cdots a_{r+1}) \in \Lang{\tau}$. This means that
    no forbidden words are produced during the first $i + (r+1)M$ rounds. Similarly we see that no forbidden word is
    produced during the final $j$ rounds. We conclude that $s'$ is a winning strategy for Alice.
  \end{proof}
  
  \begin{example}\label{ex:no_desubstitution}
    In general, not all choice sequences in $W(\tau)$ are obtainable from shorter ones as in
    \autoref{lem:short_to_long}. Consider for instance the left-marked substitution
    \begin{equation*}
      \tau\colon
      \begin{array}{l}
        0 \mapsto 001 \\
        1 \mapsto 120 \\
        2 \mapsto 201
      \end{array}
    \end{equation*}
    with synchronization delay $5$.\footnote{This is quite tedious to find by hand, we used a computer.} Its fixed
    point is
    \begin{equation*}
      001001120001001120120201001001001120001001120120201001120201001201001 \cdots.
    \end{equation*}
    The left strategy of \autoref{fig:not_long_to_short} is a winning strategy of Alice for the choice sequence
    $12111111111112$. Let us show that this strategy is not obtainable from a shorter strategy by substitution. If it
    would be the case then, by desubstituting the words on the four paths of the strategy tree, we would obtain a
    winning strategy for Alice. This desubstituted strategy is depicted on the right in
    \autoref{fig:not_long_to_short}. The letter $\diamond$ stands for one of the letters $0$, $1$, and $2$; as $\tau$
    is not right-marked, it is not immediately obvious what $\diamond$ should be. Consider the words $\diamond 01201$
    and $\diamond 01202$ corresponding to the two top paths of this desubstituted tree. It is straightforward to see
    that in $\Lang{\tau}$ the factor $01201$ is extended to the left only by the letter $0$, but $01202$ is not
    extended to the left by $0$. This means that there is no choice for $\diamond$, so no desubstituted strategy is
    winning for Alice. Observe that this happens essentially due to the fact that the $\tau$-images of $0$ and $2$ have
    a common suffix of length $2$. Notice also that the right tree of \autoref{fig:not_long_to_short} corresponds by
    its branch structure to the choice sequence $21112$, which can checked not to be in $W(\tau)$.
  \end{example}

  \begin{figure}
    \centering
    \begin{minipage}{.5\textwidth}
      \centering
      \begin{tikzpicture}
        \tikzset{level distance=40pt}
        \tikzset{level 2/.style={level distance=90pt}}
        \Tree [.$1$ [.$001120201001$ [.$1$ ] [.$2$ ] ] [.$120001001120$ [.$0$ ] [.$1$ ] ] ]
      \end{tikzpicture}
    \end{minipage}%
    \begin{minipage}{.5\textwidth}
      \centering
      \begin{tikzpicture}
        \tikzset{level distance=40pt}
        \tikzset{level 2/.style={level distance=50pt}}
        \Tree [.$\diamond$ [.$0120$ [.$1$ ] [.$2$ ] ] [.$1001$ [.$0$ ] [.$1$ ] ] ]
      \end{tikzpicture}
    \end{minipage}
    \caption{Example of a long strategy that cannot be desubstituted into a short strategy.}\label{fig:not_long_to_short}
  \end{figure}
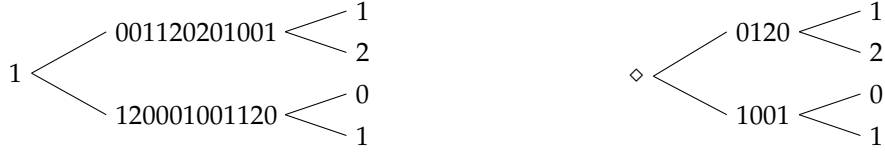

  Next we turn our attention to substitutions whose winning shifts consist essentially only of choice sequences as in
  \autoref{lem:short_to_long}. We begin with a definition.

  \begin{definition}
    Let $\alpha$ in $W(\tau)$ be a (finite) choice sequence such that $\abs{\alpha} > L$. If the winning strategies of
    $\alpha$ are obtainable from the winning strategies of shorter choice sequences in $W(\tau)$ by substitution as in
    \autoref{lem:short_to_long}, then we call $\alpha$ \emph{substitutive}.
  \end{definition}

  Our first step towards desubstituting long enough winning strategies is to consider left-marked substitutions for
  which we can prove the following lemma.
  
  \begin{lemma}\label{lem:left-marked_decomposition}
    Suppose that $\tau$ is left-marked. Let $\alpha$ in $W(\tau)$ be an irreducible choice sequence such that
    $\abs{\alpha} > L$. Then all winning plays of the game $(S, \abs{\alpha}, \Lang{\tau}, \alpha)$ have decomposition
    $\abs{\alpha} - 1 \bmod M$.
  \end{lemma}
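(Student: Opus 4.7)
The plan combines two ingredients: irreducibility of $\alpha$, which forces Bob to have at least two legal options on the last round of any winning play, and left-markedness of $\tau$, which restricts the positions where two distinct letters can simultaneously extend a word inside $\Lang{\tau}$. Together they will force the last letter of every winning play to sit at the start of a new $\tau$-image, which is precisely equivalent to the stated decomposition.

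Fix any winning play $w = w_1 \cdots w_n$ with $n = \abs{\alpha}$ and set $v = w_1 \cdots w_{n-1}$. Since $\abs{v} = n - 1 \geq L$, Moss\'{e}'s theorem pins down the values $i, j \in [0, M)$ in a representation $v = \del_{i,j}(\tau(z))$ for an ancestor $z = a_0 \cdots a_m$ (the ancestor itself may fail to be unique when $\tau$ is merely left-marked: only the first ancestor letter $a_0$ can be ambiguous). As a preliminary step I would record that the \emph{last} ancestor letter $a_m$ is always pinned down by $v$, since the letter of $v$ at position $mM - i$ is $\tau(a_m)_0$ and the map $c \mapsto \tau(c)_0$ is injective by left-markedness. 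Moreover, appending a single letter on the right preserves the left-hand synchronization of $v$, so the decomposition of $w$ agrees with that of $v$, and the whole task reduces to determining $j$.

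By irreducibility $\alpha_n \geq 2$, so Alice's final subset $S_n$ contains two distinct letters $a, a'$ with $va, va' \in \Lang{\tau}$. The central claim is $j = 0$. If $j \geq 1$, then $va = \del_{i, j-1}(\tau(z'))$ for some ancestor $z'$ of length $\abs{z}$; applying the preliminary observation to $va$ (whose last-ancestor-letter position is again $mM - i$ and agrees with $v$ there) gives that its last ancestor letter is also $a_m$. Hence the appended letter is forced to be $\tau(a_m)_{M - j}$, independently of the extension, and the same formula applied to $a'$ yields $a = a'$, a contradiction. Therefore $j = 0$: the prefix $v$ ends exactly at the end of $\tau(a_m)$, and $w_n$ begins a fresh $\tau$-image $\tau(c)$, with left-markedness ensuring that two distinct valid letters $w_n$ correspond to two distinct ancestor-extending letters $c$. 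Position $n - 1$ of $w$ is therefore a $\tau$-image starting position, so the common residue modulo $M$ of all $\tau$-image starting positions in $w$ is $(n - 1) \bmod M$, which is the decomposition asserted. The one delicate step, which I would isolate as a short aside before entering the main argument, is the preliminary uniqueness of $a_m$ for left-marked substitutions, since full ancestor uniqueness is generally guaranteed only in the marked case.
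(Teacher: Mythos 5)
Your proof is correct and takes essentially the same route as the paper's: irreducibility supplies two distinct letters completing a winning play, the synchronization delay fixes the $\tau$-image alignment of the length-$(\abs{\alpha}-1)$ prefix, and left-markedness pins down the letter whose $\tau$-image is being completed, so the final letter would be forced unless it begins a fresh image. The only cosmetic difference is that you phrase the forcing through the uniquely determined last ancestor letter $a_m$, whereas the paper phrases it through the two candidate continuations being prefixes of $\tau$-images that share a nonempty common prefix and hence coincide.
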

  \begin{proof}
    Let $s$ be any winning strategy for Alice for the word game $(S, \abs{\alpha}, \Lang{\tau}, \alpha)$. We will prove
    that the last branching at the end of the strategy tree of $s$ marks a synchronization point of any winning play,
    that is, we claim that all winning plays by Alice with strategy $s$ have decomposition $\abs{\alpha} - 1 \bmod M$.
    Let $ua$ be a word in $\Lang{s}$ for some word $u$ and letter $a$, and suppose that $ua$ has decomposition
    $i \bmod M$ (the decomposition is well-defined as $\abs{ua} \geq L$). Let $r$ be the largest integer such that
    $rM < \abs{\alpha} - i$. Consider the suffix $v$ of $u$ of length $\abs{\alpha} - rM - i - 1$, so that $va$ is a
    prefix of $\tau(c)$ for some $c \in S$. Since $\alpha$ is irreducible, the word $ub$ is a winning play for some
    letter $b$ such that $a \neq b$. Now the word $ub$ must also have decomposition $i \bmod M$ as otherwise deleting
    the last letter from the words $ua$ and $ub$ would yield two different decompositions $\bmod M$ for the word $u$
    contradicting the assumption $\abs{u} \geq L$. Thus by repeating the preceding arguments, we see that $vb$ is a
    prefix of $\tau(d)$ for some $d \in S$. Since $\tau$ is left-marked, the only option is that $v$ is empty.
    Consequently, we have $i \equiv \abs{\alpha} - 1 \pmod{M}$. Since $s$ was an arbitrary winning strategy, the claim
    follows.
  \end{proof}

  \begin{example}
    Continuing \autoref{ex:no_desubstitution}, consider the winning plays of the word game with choice sequence
    $12111111111112$, depicted on the left in \autoref{fig:not_long_to_short}. All four possible plays
    $10011202010011$, $10011202010012$, $11200010011200$, and $11200010011201$ indeed have decomposition
    $14 - 1 \equiv 1 \pmod{3}$.
  \end{example}

  \autoref{lem:left-marked_decomposition} lets us define the notion of decomposition $\bmod M$ for long enough
  irreducible choice sequences.

  \begin{definition}
    Suppose that $\tau$ is left-marked, and let $\alpha$ be an irreducible choice sequence in $W(\tau)$ such that
    $\abs{\alpha} > L$. We say that $\alpha$ has \emph{decomposition $i \bmod M$} where $i$ is the unique number such
    that all winning plays of the game $(S, \abs{\alpha}, \Lang{\tau}, \alpha)$ have decomposition $i \bmod M$.
  \end{definition}

  \begin{example}
    Let us show that without assuming that $\tau$ is left-marked, the claim of \autoref{lem:left-marked_decomposition}
    is not always true. Consider the primitive substitution
    \begin{equation*}
      \tau\colon
      \begin{array}{l}
        0 \mapsto 021 \\
        1 \mapsto 010 \\
        2 \mapsto 210,
      \end{array}
    \end{equation*}
    which is not left-marked, nor are any of its conjugates since $\tau$ does not have any.\footnote{The conjugate of a substitution $\tau$ is the substitution obtained by cyclically shifting the common prefix of the $\tau$-images. A substitution and its conjugate have the same language.}
    The substitution $\tau$ has synchronization delay $6$, and its fixed point is
    \begin{equation*}
      021210010210010021021010021210010021021010021021210010021210010021010 \cdots.
    \end{equation*}
    The strategy tree of \autoref{fig:decomposition_fail} shows that $3111112 \in W(\tau)$. Now not all plays with this
    winning strategy have the same decomposition $\bmod{3}$ because in the $\tau$-images only two distinct letters may
    occur at a fixed position. In fact, we conjecture something stronger: $31^n 2 \in W(\tau)$ for infinitely many $n$.
  \end{example}

  \begin{figure}
    \centering
    \begin{tikzpicture}
      \tikzset{level 1/.style={level distance=60pt}}
      \tikzset{level 2/.style={level distance=80pt}}
      \Tree [.$\varepsilon$ [.$010021$ [.$0$ ] [.$2$ ] ] [.$100210$ [.$1$ ] [.$2$ ] ] [.$210010$ [.$0$ ] [.$2$ ] ] ]
    \end{tikzpicture}
    \caption{Example of a long winning strategy whose plays have different decompositions $\bmod{3}$.}\label{fig:decomposition_fail}
  \end{figure}
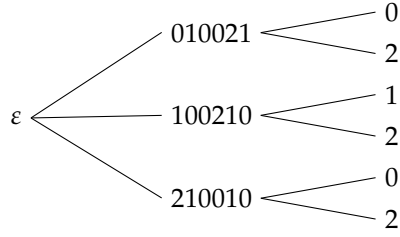

  Before we begin desubstituting long strategies, we prove the following lemma, which gives a description of
  the form of the choice sequences in $W(\tau)$. Let
  $\sigma_i\colon \{1,2,\ldots,\abs{S}\}^* \to \{1,2,\ldots,\abs{S}\}^*$ be the substitution defined by
  $\sigma_i(k) = k1^{i-1}$ for $k \in \{1,2,\ldots,\abs{S}\}$.

  \begin{lemma}\label{lem:left-marked_form}
    Suppose that $\tau$ is left-marked. If $\alpha$ is an irreducible choice sequence in $W(\tau)$ such that
    $\abs{\alpha} = rM + i + 1 > L$ with $0 \leq i < M$ ($\alpha$ has decomposition $i \bmod{M}$), then
    $\del_{i,1}(\alpha) \in \sigma_M(\{1,2,\ldots,\abs{S}\}^r)$.
  \end{lemma}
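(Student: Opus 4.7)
The plan is to combine \autoref{lem:left-marked_decomposition} with left-markedness to show that at every round of $\alpha$ strictly inside a complete $\tau$-image block, Alice is forced to offer Bob a singleton, so the corresponding entry of $\alpha$ equals $1$.

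Number the rounds $1,\ldots,rM+i+1$. By \autoref{lem:left-marked_decomposition}, every winning play of $\alpha$ has decomposition $i \bmod M$, meaning the letters chosen at rounds $i+1, i+M+1, \ldots, i+rM+1$ open successive $\tau$-image blocks. Thus any winning play splits as a length-$i$ suffix of some $\tau(a_0)$, followed by the complete $\tau$-images $\tau(a_1),\ldots,\tau(a_r)$ built during rounds $i+1,\ldots,i+rM$, followed by the first letter of $\tau(a_{r+1})$ chosen at the final round. The subword $\del_{i,1}(\alpha)=\alpha_{i+1}\alpha_{i+2}\cdots\alpha_{i+rM}$ therefore governs the rounds inside the $r$ complete inner blocks, with the entries $\alpha_{i+1},\alpha_{i+M+1},\ldots,\alpha_{i+(r-1)M+1}$ at block starts serving as the $r$ free letters of the purported $\sigma_M$-preimage. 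It remains to prove that all other entries of $\del_{i,1}(\alpha)$ equal $1$.

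Fix a round $k$ with $i+1<k\leq i+rM$ and $k\not\equiv i+1\pmod M$, and let $k_0$ be the first round of the inner block containing $k$, so $k_0<k$. Fix a winning strategy $s$ for Alice and take any $u\in\Lang[s]{k-1}$; Alice then offers the set $s(u)$ of size $\alpha_k$ at round $k$. Any $b\in s(u)$ extends to a winning play, which by \autoref{lem:left-marked_decomposition} has decomposition $i\bmod M$, so the letters chosen at rounds $k_0,k_0+1,\ldots,k_0+M-1$ form a single $\tau$-image $\tau(c)$. The first letter of this block was already played at round $k_0\leq k-1$, and since $\tau$ is left-marked it determines $c$ uniquely, fixing every letter of the block. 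In particular, the letter chosen at round $k$ is forced to a single value; hence every $b\in s(u)$ equals this forced letter, giving $|s(u)|=\alpha_k=1$.

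The main subtlety is that a deviating letter played by Bob at round $k$ might, a priori, still produce a word in $\Lang{\tau}$ via some alternative ancestor decomposition; this is exactly where \autoref{lem:left-marked_decomposition} is doing the essential work, as it forces every winning extension to retain the decomposition $i\bmod M$, after which left-markedness leaves no flexibility inside a block. Collecting the constraints produced by the argument gives
\begin{equation*}
  \del_{i,1}(\alpha) = \alpha_{i+1}\cdot 1^{M-1}\cdot\alpha_{i+M+1}\cdot 1^{M-1}\cdots\alpha_{i+(r-1)M+1}\cdot 1^{M-1} = \sigma_M\bigl(\alpha_{i+1}\alpha_{i+M+1}\cdots\alpha_{i+(r-1)M+1}\bigr),
\end{equation*}
which lies in $\sigma_M(\{1,\ldots,\abs{S}\}^r)$, as claimed.
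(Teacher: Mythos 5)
Your proof is correct and takes essentially the same route as the paper's: fix the decomposition of all winning plays via \autoref{lem:left-marked_decomposition}, then use left-markedness to argue that inside each complete $\tau$-image block the already-played first letter determines the rest of the image, so Alice can only offer singletons at non-initial positions of a block. The paper argues this for the last block and then ``repeats the argument $r-1$ more times,'' whereas you handle an arbitrary interior round directly; the content is identical.
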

  \begin{proof}
    If $r = 0$, then there is nothing to prove, so we assume that $r > 0$. Consider the positions $\abs{\alpha}-M-1$,
    $\abs{\alpha}-(M-1)-1$, \ldots, $\abs{\alpha}-2$ of $\alpha$. Among these positions only the position
    $\abs{\alpha}-M-1$ may contain a letter that is greater than $1$. Otherwise in some play Bob could make a choice
    inside a $\tau$-image; recall that the decomposition $\bmod{M}$ of the plays is fixed before the game even starts,
    see \autoref{lem:left-marked_decomposition}. This is impossible as $\tau$ is left-marked. Thus the letters at
    positions $\abs{\alpha}-M-1$ to $\abs{\alpha}-2$ spell out a word of the form $k1^{M-1}$ with
    $k \in \{1,2,\ldots,\abs{S}\}$. Thus by repeating this argument $r - 1$ more times, the claim follows.
  \end{proof}

  Next we consider only marked substitutions and show that then desubstitution is possible.

  \begin{theorem}\label{thm:marked_substitutive}
    Suppose that $\tau$ is marked. Let $\alpha$ in $W(\tau)$ be an irreducible choice sequence such that
    $\abs{\alpha} > L$. Then $\alpha$ is substitutive and if $\alpha$ has decomposition $i \bmod{M}$, then there exists
    an irreducible choice sequence $a_0 a_1 \cdots a_{n-1}$ in $W(\tau)$ with winning strategy $s$ such that
    \begin{equation*}
      \alpha \in W(\suff_i(\tau(s(\varepsilon))))\sigma_M(a_1 \cdots a_{n-2})a_{n-1}
    \end{equation*}
    where $n$ is the largest integer such that $(n-2)M < \abs{\alpha} - i$.
  \end{theorem}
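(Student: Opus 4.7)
My plan is to desubstitute an arbitrary winning strategy $s_\alpha$ of $\alpha$ using the uniqueness of ancestors that marked substitutions enjoy. Since $\tau$ is left-marked, \autoref{lem:left-marked_decomposition} assigns $\alpha$ a well-defined decomposition $i \bmod M$ and \autoref{lem:left-marked_form} gives
\begin{equation*}
  \alpha = \beta\,\sigma_M(b_1 b_2 \cdots b_r)\,b
\end{equation*}
with $\abs{\beta} = i$, letters $b_k \in \{1,\ldots,\abs{S}\}$, and final letter $b \neq 1$ by irreducibility of $\alpha$; the choice $r = n - 2$ satisfies the maximality condition in the theorem statement. Because $\tau$ is also right-marked and $\abs{\alpha} > L$, every winning play $w$ of $s_\alpha$ has a unique ancestor $c_0 c_1 \cdots c_{r+1}$ in $\Lang{\tau}$, and the fixed decomposition forces the explicit form $w = \suff_i(\tau(c_0))\tau(c_1)\cdots\tau(c_r)c'$ with $c'$ the first letter of $\tau(c_{r+1})$.

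From this unique ancestor assignment I would build the short strategy $s$ by setting $s(c_0 c_1 \cdots c_j)$ to be the set of letters $c_{j+1}$ such that $c_0 c_1 \cdots c_{j+1}$ is the prefix of some ancestor obtained above. The crucial point is to read off the correct branching size at every node. At long-game position $i + (k-1)M$ Bob faces $\alpha_{i+(k-1)M} = b_k$ options, and by left-markedness these options correspond bijectively to $b_k$ possible values of $c_k$; moreover, each of Alice's subsequent $M-1$ singleton choices is forced to match the rest of $\tau(c_k)$, since any winning play must keep the fixed decomposition $i \bmod M$. Therefore $\abs{s(c_0\cdots c_{k-1})} = b_k$ for $1 \leq k \leq r$, and the same reasoning at positions $0$ and $i + rM$ determines $a_0 := \abs{s(\varepsilon)}$ and $\abs{s(c_0 \cdots c_r)} = b$. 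Setting $a_k := b_k$ for $1 \leq k \leq r$ and $a_{n-1} := b$ then makes $a_0 a_1 \cdots a_{n-1}$ an irreducible choice sequence in $W(\tau)$ with $s$ as a winning strategy.

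To realize the product factorization, the restriction of $s_\alpha$ to its first $i$ rounds is a winning strategy for $(S, i, \suff_i(\tau(s(\varepsilon))), \beta)$, because by \autoref{lem:left-marked_decomposition} every winning play begins with $\suff_i(\tau(c_0))$ for some $c_0 \in s(\varepsilon)$. Hence $\beta \in W(\suff_i(\tau(s(\varepsilon))))$, and the factorization of $\alpha$ exhibits it in $W(\suff_i(\tau(s(\varepsilon))))\,\sigma_M(a_1\cdots a_{n-2})\,a_{n-1}$. Feeding $s$, the initial strategy just described, and the trivial final strategy on $\pref_1(\tau(s(a)))$ into \autoref{lem:short_to_long} reproduces $s_\alpha$ exactly, so $\alpha$ is substitutive. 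The main obstacle is precisely the well-definedness of $s$: the set $s(c_0 \cdots c_j)$ and its size must depend only on the tuple $c_0 \cdots c_j$, not on the particular winning play producing this ancestor prefix. Left-markedness alone merely pins down first letters of $\tau$-images; in a left-marked but not right-marked substitution two winning plays with a common long-game prefix could desubstitute to inconsistent ancestors, inflating or collapsing a branching factor, which is exactly the pathology exhibited by \autoref{ex:no_desubstitution}. Right-markedness, equivalently the uniqueness of ancestors, is what rules this out and validates the construction.
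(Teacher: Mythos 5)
Your proposal is correct and follows essentially the same route as the paper's proof: it uses \autoref{lem:left-marked_decomposition} and \autoref{lem:left-marked_form} to fix the skeleton $\beta\,\sigma_M(b_1\cdots b_r)\,b$, left-markedness to recover each ancestor letter from the first letter of its $\tau$-image, right-markedness (equivalently, uniqueness of ancestors) to collapse the initial $i$ rounds and to make the desubstituted strategy well defined, and \autoref{lem:short_to_long} to conclude substitutivity. The only difference is presentational---you build the short strategy from the unique ancestors of the winning plays, whereas the paper phrases the same construction as a branch-preserving desubstitution of the strategy tree---and you correctly isolate the well-definedness of the branching sizes as the point where right-markedness is indispensable.
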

  \begin{proof}
    Let $\alpha$ in $W(\tau)$ be an irreducible choice sequence having decomposition $i \bmod{M}$ such that
    $\abs{\alpha} > L$. Let $s$ be Alice's winning strategy for the word game with choice sequence $\alpha$. By
    definition, the strategy tree of $s$ branches at positions where $\alpha$ contains a letter that is greater than
    $1$. Let us show how to perform a branch-preserving desubstitution on $s$ to obtain a shorter winning strategy
    $s'$.
    
    Consider first a leaf of the strategy tree of $s$. Since $i \equiv \abs{\alpha} - 1 \pmod{M}$ by
    \autoref{lem:left-marked_decomposition}, the last letter $a$ of the play corresponding to this leaf is the first
    letter of some $\tau$-image. Since $\tau$ is left-marked, there is a unique letter $b$ such that $\tau(b)$ begins
    with $a$. We replace the leaf corresponding to $a$ with a leaf corresponding to $b$.

    Next we show how to desubstitute the factors between two branchings in the middle of the strategy tree $s$. Say $j$
    and $k$ are consecutive positions of $\alpha$ containing letters that are greater than $1$ such that
    $k > j \geq i$. By \autoref{lem:left-marked_form}, the factor of $\alpha$ starting at position $j$ and ending at
    position $k - 1$ is of the form $\ell 1^{tM - 1}$ for some $\ell \in \{2,\ldots,\abs{S}\}$. Let $w$ be any winning
    play with the strategy $s$. Since $w$ has decomposition $i \bmod M$, it follows that the factor of $w$ starting at
    position $j$ and ending at position $k - 1$ is a $\tau$-image of some shorter word in $\Lang{\tau}$. This means
    that after $i$ rounds have been played, any time Alice's strategy branches, Bob has just completed a $\tau$-image
    on his previous turn. This means that it is possible to do a branch-preserving desubstitution on the subtrees of
    length $\abs{\alpha} - i$ of the strategy tree of $s$: the factor played between two branchings is a $\tau$-image
    of a shorter word in $\Lang{\tau}$ and can be directly desubstituted (since $\tau$ is injective). If there are no
    branchings before the final branching, then we can directly desubstitute the factor of any play starting at
    position $i$ and ending at position $\abs{\alpha} - 2$ (which could be empty).
    
    Now if $i = 0$, then we have desubstituted the whole strategy tree of $s$, and we are done. Suppose that $i > 0$.
    As $\tau$ is right-marked, the letter at position $i - 1$ of $w$ uniquely identifies a letter $a$ in $S$ such that
    the prefix of $w$ of length $i$ is a suffix of $\tau(a)$. We modify $s$ by replacing the first $i$ choices by a
    single choice of $a$ on a path corresponding to the play $w$. In other words, we let $a \in s'(\varepsilon)$ and
    set $s'(a)$ to contain the desubstituted subtree obtained above for the suffix of $w$ of length $\abs{\alpha} - i$.
    Now $s'$ is a strategy and it has the same branch structure as $s$ save for the initial part of $i$ rounds. By
    construction, all plays by $s'$ are ancestors of the plays with the winning strategy $s$, so $s'$ must also be a
    winning strategy. The strategy $s$ is clearly obtained from the strategy $s'$ by substitution as in
    \autoref{lem:short_to_long}. Therefore $\alpha$ is substitutive. The desubstitution process described clearly
    indicates that $\alpha$ has the claimed form.
  \end{proof}

  The essential message of \autoref{thm:marked_substitutive} is that knowing all winning strategies for irreducible
  choice sequences in $W(\tau)$ up to length $L$ is enough to derive winning strategies for all irreducible choice
  sequences---Alice does not need to learn much to beat Bob. Notice also that we can effectively enumerate $W(\tau)$
  when $\tau$ is marked, the sets $W(\pref_i(\tau(s(\varepsilon))))$ in the statement of
  \autoref{thm:marked_substitutive} are easily found by exhaustive search.

  Notice that substituting a strategy tree by $\sigma_M$ preserves its branch structure. Conversely, desubstituting, as
  in \autoref{thm:marked_substitutive}, preserves most of the branch structure. Indeed, supposing that $\tau$ is
  marked, then the subtree of the winning strategy of a word in $W(\tau)$, as in the third paragraph of the proof of
  \autoref{thm:marked_substitutive}, has the same branch structure as the desubstituted subtree. The initial part of
  the tree comes from a winning set played on suffixes of $\tau$-images. As there are finitely many of these, we
  conclude that there can be only finitely many different branch structures in the winning trees associated to the
  winning shift $W(\tau)$.  This means that in any choice sequence the number of letters greater than $1$ is bounded.
  In essence, Bob can almost never make a difference: on most turns, he has no options but to play what Alice wants.
  Compared to real life games, this makes our game somewhat degenerate. We emphasize that a priori it is not clear if
  Bob gets to play often or not.

  Observe that substituting two short winning strategies for two distinct choice sequences of the same length could
  yield the same longer choice sequence. For instance, if $2u$ and $3u$ are in $W(\tau)$, then cutting a branch of
  length $\abs{u}+1$ from the winning strategy $s$ for the choice sequence $3u$ yields a winning strategy $s'$ for the
  choice sequence $2u$. It follows that $W(\suff_i(\tau(s'(\varepsilon)))) \subseteq W(\suff_i(\tau(s(\varepsilon))))$,
  so all choice sequences obtained by substituting the winning strategy $s'$ are already obtained by substituting the
  winning strategy $s$. This is further elaborated in the proof of \autoref{thm:first_difference}. Moreover, it is
  possible that by substituting two distinct winning strategies for a fixed choice sequence produces distinct long
  choice sequences.

  Notice that the prefix of $\alpha$ of length $i$, as in the statement of \autoref{thm:marked_substitutive}, can be
  very complicated: we only assume that $\tau$ is aperiodic and marked and that it has synchronization delay, so the
  interior parts of the $\tau$-images can be chosen almost arbitrarily. To simplify the situation, assume that $\tau$
  is permutive. It is now clear that the suffix games related to the $\tau$-images are trivial:
  $W(\suff_i(\tau(A))) = \{k1^{i-1}, (k-1)1^{i-1}, \ldots, 1^i\}$, where $A$ is a subset of $S$ of $k$ elements. To put
  it in other words: $W(\suff_i(\tau(A))) = \sigma_i(\{k,k-1,\ldots,1\})$. Thus by \autoref{thm:marked_substitutive},
  we see that the winning shift $W(\tau)$ has the following substitutive structure.

  \begin{proposition}\label{prp:permutive}
    Suppose that $\tau$ is permutive. If $\alpha$ in $W(\tau)$ is an irreducible choice sequence such that
    $\alpha = \diamond w a$ with letters $\diamond$ and $a$, then $\sigma_i(\diamond)\sigma_M(w)a$ is in $W(\tau)$ for
    $1 \leq i \leq M$, and all choice sequences $\alpha$ of length at least $L+1$ are obtained in this way.
  \end{proposition}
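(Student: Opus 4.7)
The plan is to combine the forward direction (building long sequences via \autoref{lem:short_to_long}) and the converse (deconstructing long sequences via \autoref{thm:marked_substitutive}, which applies since permutive implies marked). The main extra input special to the permutive case is the computation, already indicated in the paragraph preceding the statement, of the winning sets of suffixes, prefixes, and $\tau$-images of subsets of $S$.

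For the forward direction, I would take a winning strategy $s$ for the short irreducible choice sequence $\alpha = \diamond w a$ and exhibit $\sigma_i(\diamond)\sigma_M(w)a$ as a concrete element of the product set on the left-hand side of \autoref{lem:short_to_long} with $j = 1$. The single observation driving the argument is that for permutive $\tau$ and any subset $T \subseteq S$, the letters occurring at any fixed position of $\{\tau(c) \colon c \in T\}$ form a set of $\abs{T}$ distinct letters. This lets Alice win a game with choice sequence $\abs{T}\,1^{r-1}$ on the suffixes of length $r$ of $\tau(T)$, on $\tau(T)$ itself, or on $\pref_1(\tau(T))$: she commits Bob to one $\tau$-image on the ``branching'' round by choosing all $\abs{T}$ distinguishing letters at the relevant position, and then forces every remaining letter as a singleton. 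Applying this with $T = s(\varepsilon)$, with $T = s(b)$ for each $b \in \Lang[s]{k}$, and with $T = s(b)$ for each $b \in \Lang[s]{n-1}$ produces respectively $\sigma_i(\diamond)$, $\sigma_M(w_k)$, and $a$ in the required factors, so \autoref{lem:short_to_long} yields $\sigma_i(\diamond)\sigma_M(w)a \in W(\tau)$.

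For the converse, I apply \autoref{thm:marked_substitutive} to an irreducible $\alpha \in W(\tau)$ of length at least $L+1$: there is an irreducible choice sequence $a_0 a_1 \cdots a_{n-1}$ in $W(\tau)$ and a winning strategy $s$ for it such that
\begin{equation*}
  \alpha \in W(\suff_i(\tau(s(\varepsilon)))) \cdot \sigma_M(a_1 \cdots a_{n-2}) \cdot a_{n-1}.
\end{equation*}
The same positional-permutivity observation, combined with \autoref{prp:cardinality} to match cardinalities, identifies $W(\suff_i(\tau(s(\varepsilon))))$ with $\{\sigma_i(\ell) \colon 1 \leq \ell \leq a_0\}$. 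Hence $\alpha = \sigma_i(\ell)\,\sigma_M(a_1 \cdots a_{n-2})\,a_{n-1}$ for some $\ell \leq a_0$, and setting $\diamond = \ell$, $w = a_1 \cdots a_{n-2}$, $a = a_{n-1}$ exhibits $\alpha$ in the stated form. Downward closure of $W(\tau)$ gives $\diamond w a \in W(\tau)$, and irreducibility is inherited since $a = a_{n-1} \neq 1$.

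There is no serious obstacle; the content lies in the two results invoked, and the permutive hypothesis is used only to replace the otherwise opaque sets $W(\suff_i(\tau(s(\varepsilon))))$, $W(\tau(s(b)))$, and $W(\pref_1(\tau(s(b))))$ by explicit elements of the form $\sigma_r(k)$. The only minor point requiring care is that all three kinds of winning sets admit the required $\sigma_r(k)$ \emph{uniformly} over $b \in \Lang[s]{k}$; this follows from the fact that positional permutivity is a property of $\tau$ applied to every subset of $S$, so the argument does not depend on which ancestor $b$ has been built so far.
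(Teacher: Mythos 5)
Your proposal is correct and follows essentially the same route as the paper, which proves this proposition only implicitly in the paragraph preceding it: the single substantive input is that permutivity trivializes the winning sets of (suffixes and prefixes of) $\tau$-images of a $k$-element subset to $\sigma_r(\{k,k-1,\ldots,1\})$, after which the forward inclusion is \autoref{lem:short_to_long} and the converse is \autoref{thm:marked_substitutive} plus downward closure. Your write-up merely makes both directions explicit where the paper leaves them as ``it is now clear.''
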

  
  Since $\sigma_i$ is injective, the relation of the preceding proposition is a bijection from irreducible choice
  sequences of length $\abs{\alpha}$ to irreducible choice sequences of length $i+(\abs{\alpha}-2)M+1$. Such a
  bijection exists also in the case where $\tau$ is only marked as we shall see next in \autoref{thm:first_difference}.
  For its proof, we need the following lemma.

  \begin{lemma}\label{lem:first_choices_unique}
    Let $ku \in W(X)$ for a set $X$, a letter $k$, and a word $u$, and suppose that $k$ is maximal (for $u$). Then
    there exists a unique subset $A$ of $S$ of size $k$ such that $s(\varepsilon) \subseteq A$ for all Alice's winning
    strategies $s$ for a choice sequence $tu$ with $0 \leq t \leq k$.
  \end{lemma}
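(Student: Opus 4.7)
The plan is to deduce both the existence and the uniqueness of $A$ by two contradictions, each leveraging the maximality of $k$. The unifying idea is that if two winning first moves are available and disagree in some letter, Alice can glue them together to get a strictly larger winning first move, which is prohibited by maximality.

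First, I will show that all winning strategies for the choice sequence $ku$ agree on the first round. Suppose $s_1$ and $s_2$ are two winning strategies for $ku$ with $s_1(\varepsilon) = A_1$ and $s_2(\varepsilon) = A_2$, where $A_1 \ne A_2$ (both of size $k$). Consider the strategy $\tilde s$ for the choice sequence $\abs{A_1 \cup A_2} \cdot u$ defined by $\tilde s(\varepsilon) = A_1 \cup A_2$ and thereafter following $s_1$ if Bob's first reply lies in $A_1$, and following $s_2$ otherwise. Every continuation of $\tilde s$ is governed by one of the winning strategies $s_1, s_2$ and so ends in $X$, making $\tilde s$ winning. Since $\abs{A_1 \cup A_2} > k$, this contradicts the maximality of $k$. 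Hence there is a unique subset $A$ of size $k$ with $s(\varepsilon) = A$ for every winning strategy $s$ of $ku$.

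Second, take $1 \leq t < k$ (the case $t=0$ is vacuous since $\emptyset \subseteq A$), and let $s$ be any winning strategy for the choice sequence $tu$ with $s(\varepsilon) = B$ of size $t$. Assume for contradiction that some $b \in B \setminus A$ exists. Fix a winning strategy $s_0$ for $ku$, so $s_0(\varepsilon) = A$ by the first step. Define $\tilde s$ for the choice sequence $(k+1)u$ by $\tilde s(\varepsilon) = A \cup \{b\}$, continuing with $s_0$ on histories beginning with a letter of $A$ and with $s$ on histories beginning with $b$. Since both branches are governed by winning strategies, $\tilde s$ is winning; but $\abs{A \cup \{b\}} = k+1 > k$, again contradicting maximality. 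Therefore $B \subseteq A$, which is the claim of the lemma.

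The main point requiring care is checking that the composite strategies $\tilde s$ are legitimate strategies in the sense defined in the paper, i.e., that whenever we invoke $s_i(w)$ on a history $w$ of positive length, the first letter of $w$ is a legal reply to $s_i(\varepsilon)$, so $s_i$'s guarantee applies to all subsequent rounds of that branch. Once this bookkeeping is settled, the two arguments are essentially identical, and the entire proof reduces to the combinatorial observation that a pair of disjoint winning first moves can be merged into a single larger one---precisely what maximality of $k$ forbids.
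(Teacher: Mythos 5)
Your proof is correct and takes essentially the same approach as the paper: both steps are the same cut-and-paste argument, transplanting winning subtrees onto a merged first move of size exceeding $k$ to contradict maximality. The only cosmetic difference is that in the first step the paper transplants a single letter's subtree to land exactly on $(k+1)u$, whereas you merge the full union $A_1 \cup A_2$ and rely (implicitly) on downward closure of $W(X)$ to reach the same contradiction.
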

  \begin{proof}
    Let $s$ and $s'$ be two different winning strategies for the choice sequence $ku$. If
    $s(\varepsilon) \neq s'(\varepsilon)$, then there would be a letter in, say,
    $s(\varepsilon) \setminus s'(\varepsilon)$. By removing the subtree of length $n$ associated to this letter from
    the strategy tree of $s$ and attaching it to the strategy tree of $s'$, we obtain a new strategy. This new strategy
    clearly is a winning strategy for Alice for the choice sequence $(k + 1)u$ contradicting the maximality of the
    letter $k$. Thus the set $s(\varepsilon)$ is the same for all Alice's winning strategies $s$ for the choice
    sequence $ku$, and we may denote it by $A$.

    Consider then a choice sequence $tu$ with $t < k$, and let $e$ be Alice's arbitrary winning strategy for it. It
    must be that $e(\varepsilon) \subseteq A$ as otherwise there would be a letter in $e(\varepsilon) \setminus A$, and
    we could attach the subtree associated to it to the strategy tree of Alice's winning strategy for the choice
    sequence $ku$, like above, to obtain a contradiction with the maximality of the letter $k$.
  \end{proof}

  The next theorem states the same result as \cite[Corollary~3]{1998:on_uniform_d0l_words}. For the statement, we
  define $K$ to be the least integer such that $MK + 1 \geq L$.

  \begin{theorem}\label{thm:first_difference}
    Assume that $\tau$ is marked. Suppose that $n \geq K + 2$, and write $n = M^k r + \ell + 1$ with $k \geq 0$,
    $r \in \{K, K + 1, \ldots, KM - 1\}$, and $\ell \in \{1, \ldots, M^k\}$. Then $\Delta(n) = \Delta(r + 2)$
  \end{theorem}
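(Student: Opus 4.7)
The plan is to induct on $k$. The base case $k = 0$ forces $\ell = 1$ and hence $n = r + 2$, so the claim is immediate. For the inductive step with $k \geq 1$, fix $n = M^k r + \ell + 1$ and aim to prove the one-step reduction $\Delta(n) = \Delta(m)$, where $m$ is the length of the short choice sequence produced by \autoref{thm:marked_substitutive}. Writing $\ell - 1 = Mq + s$ with $0 \leq s < M$ and setting $\ell' := q + 1$, one checks that $\ell' \in \{1, \ldots, M^{k-1}\}$ and $m = M^{k-1} r + \ell' + 1$, placing $m$ in the level-$(k-1)$ block associated to the same $r$; the induction hypothesis then gives $\Delta(m) = \Delta(r + 2)$.

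Since $n \geq KM + 2 > L$, \autoref{thm:marked_substitutive} applies to every irreducible $\alpha \in W(\tau)$ of length $n$, producing a shorter irreducible $\hat\alpha = a_0 a_1 \cdots a_{m-1} \in W(\tau)$ of length $m$ with winning strategy $s$, together with a prefix $\beta \in W(\suff_i(\tau(s(\varepsilon))))$ such that $\alpha = \beta \, \sigma_M(a_1 \cdots a_{m-2}) \, a_{m-1}$. I would group the longs by their common tail $u = a_1 \cdots a_{m-1}$: by \autoref{lem:first_choices_unique} each valid tail $u$ determines a unique maximal set $A_u \subseteq S$ of size $k_u$ such that $\diamond u \in W(\tau)$ precisely when $\diamond \leq k_u$, and as $\diamond$ varies in $\{1, \ldots, k_u\}$ the associated first-move sets $A_\diamond \subseteq A_u$ form a nested chain. \autoref{lem:inclusion} then gives $W(\suff_i(\tau(A_\diamond))) \subseteq W(\suff_i(\tau(A_u)))$, so the distinct long irreducibles with tail $u$ correspond bijectively to elements of $W(\suff_i(\tau(A_u)))$.

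Right-markedness of $\tau$ closes the count: the last letters of $\tau(a)$ for $a \in A_u$ are pairwise distinct, so for $i \geq 1$ the words $\suff_i(\tau(a))$ are distinct, giving $|\suff_i(\tau(A_u))| = k_u$ and hence $|W(\suff_i(\tau(A_u)))| = k_u$ by \autoref{prp:cardinality}. Summing, $\Delta(n) = \sum_u k_u$; the right-hand side counts pairs (tail $u$, valid first letter $\diamond \leq k_u$), which is by definition the number of short irreducibles, namely $\Delta(m)$.

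The main obstacle I anticipate is the boundary decomposition class $i = 0$, where $\suff_0$ is empty and the count above appears to degenerate. This should be resolved by recasting that class in the style of \autoref{prp:permutive} with $i = M$ rather than $i = 0$: then $\suff_M(\tau(A_u)) = \tau(A_u)$ has $|A_u| = k_u$ elements by injectivity of $\tau$ on $S$, and \autoref{prp:cardinality} again yields $|W(\suff_M(\tau(A_u)))| = k_u$, so the identity $\Delta(n) = \Delta(m)$ persists uniformly across all decomposition classes.
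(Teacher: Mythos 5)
Your proposal is correct and follows essentially the same route as the paper's proof: both arguments fix the common tail $u$ of the short choice sequence, invoke \autoref{lem:first_choices_unique} and \autoref{lem:inclusion} to reduce the count of long irreducible sequences with that tail to $\abs{W(\suff_i(\tau(A_u)))}$, and then use right-markedness together with \autoref{prp:cardinality} to evaluate this as $k_u$, summing over $u$ to get $\Delta(n)=\Delta(m)$. The only differences are presentational: you run the reduction downward (desubstitution plus an explicit induction on $k$, including the $i=M$ convention for the boundary class), whereas the paper runs it upward from length $n$ to $n(i)=i+(n-2)M+1$ and leaves the final arithmetic as ``a straightforward computation.''
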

  \begin{proof}
    Consider irreducible choice sequences in $W(\tau)$ of length $n$ ending with a word $u$ of length $n - 1$. Let $k$
    be the largest letter such that $ku \in W(\tau)$. When a winning strategy for the choice sequence $ku$ is
    substituted, as in \autoref{lem:short_to_long}, we obtain a winning strategy for an irreducible choice sequence of
    length $n(i)$, where $n(i) = i + (n-2)M + 1$ with $1 \leq i \leq M$. Moreover, the final $(n-2)M + 1$ letters of
    such a choice sequence are independent of the prefix $k$ by \autoref{thm:marked_substitutive}. Further, as
    $n(i) > L$, \autoref{thm:marked_substitutive} implies that all irreducible choice sequences of length $n(i)$ are
    obtained by substitution. Now there are a total of $k$ irreducible choice sequences of length $n$ with suffix $u$,
    so if we show that a total of $k$ distinct irreducible choice sequences of length $n(i)$ are obtainable from them
    by substitution, then we have shown that there are equally many irreducible choice sequences of length $n$ and
    $n(i)$.

    Let $A$ be as in \autoref{lem:first_choices_unique}. Consider a choice sequence $tu$, $0 \leq t \leq k$, with
    winning strategy $s$. The choice sequences of length $n(i)$ obtained from $tu$ by substitution are determined by
    the words in $W(\suff_i(\tau(s(\varepsilon))))$. Lemmas \ref{lem:first_choices_unique} and \ref{lem:inclusion}
    imply that $W(\suff_i(\tau(s(\varepsilon)))) \subseteq W(\suff_i(\tau(A)))$, so what is relevant is the size of
    $W(\suff_i(\tau(A)))$. \autoref{lem:first_choices_unique} and \autoref{prp:cardinality} show that the size of
    $W(\suff_i(\tau(A)))$ is $k$. Therefore a total of $k$ irreducible choice sequences of length $n(i)$ are obtainable
    from choice sequences with suffix $u$. As mentioned in the previous paragraph, we have proved that
    $\Delta(n) = \Delta(n(i))$. The claim follows by a straightforward computation.
  \end{proof}

  \begin{example}
    \autoref{thm:first_difference} is not true if $\tau$ is only left-marked. Consider for instance the substitution
    $\tau$ of \autoref{ex:no_desubstitution} Now $14 = 3 \cdot 4 + 1 + 1$, so \autoref{thm:first_difference} would
    predict that $\Delta(14) = \Delta(6)$. However, by a direct computation, it can be seen that in this case
    $\Delta(14) = 5$ but $\Delta(6) = 4$.
  \end{example}

  \autoref{thm:first_difference} can be used to derive the factor complexity function $f$ of a marked uniform
  substitution $\tau$ because $f(n) = 1 + \sum_{i = 1}^n \Delta(i)$. As the precise details in finding the exact
  formula do not involve word games, we omit the details and refer the reader to
  \cite[Theorem~2]{1998:on_uniform_d0l_words}.

  Notice also that \autoref{thm:first_difference} proves that the first difference function is a $M$-automatic
  sequence, so the factor complexity function is a $M$-regular sequence; see \cite{2003:automatic_sequences}. This
  holds for arbitrary uniform substitution.

  \section{Winning Shifts of Generalized Thue-Morse Words}\label{sec:gtm}
  In this section, we describe the winning shifts of generalized Thue-Morse words and, using our results, derive the
  known formulas for their factor complexity functions. For more on generalized Thue-Morse words, see e.g.
  \cite{2000:sums_of_digits_overlaps_and_palindromes}. Our notation largely follows
  \cite{2012:generalized_thue-morse_words_and_palindromic_richness}.

  Let $s_b(n)$ denote the sum of digits in the base-$b$ representation of the integer $n$. For $b \geq 2$ and
  $m \geq 1$, the generalized Thue-Morse word $\infw{t}_{b,m}$ is defined as the infinite word whose letter at position
  $n$ equals $s_b(n) \bmod{m}$. It is straightforward to prove that $\infw{t}_{b,m}$ is the fixed point, beginning with
  the letter $0$, of the primitive substitution $\varphi_{b,m}$ defined by 
  \begin{equation*}
    \varphi_{b,m}(k) = k(k+1)(k+2) \cdots (k + (b-1)),
  \end{equation*}
  for $k \in \{0, 1, \ldots, m - 1\}$, where the letters are interpreted modulo $m$. The word $\infw{t}_{b,m}$ is
  ultimately periodic if and only if $b \equiv 1 \pmod{m}$ \cite{2000:sums_of_digits_overlaps_and_palindromes}. We make
  the assumption that $\infw{t}_{b,m}$ is aperiodic.
  
  To clarify the notation, from now on we assume that letters are elements of the group $\Z_m$, so that we can
  naturally add letters. Moreover, we keep $b$ and $m$ fixed and simply write $\varphi$ for $\varphi_{b,m}$.

  Let $\pi\colon \Z_m \to \Z_m$ denote the permutation defined by setting $\pi(k) = k + b - 1$. In other words, the
  permutation $\pi$ maps $k$ to the final letter of the word $\varphi(k)$. We set $q$ to be the order of $\pi$, that
  is, the least positive integer such that $q(b-1) \equiv 0 \pmod{m}$.

  To describe the winning shift $W(\varphi)$ of $\varphi$, it is crucial to know words of $\Lang{\varphi}$ of length
  $2$ and $3$. Our proof is almost verbatim from \cite{2012:generalized_thue-morse_words_and_palindromic_richness}.

  \begin{lemma}\label{lem:factors_2_3}
    We have
    \begin{itemize}
      \item $\Lang[\varphi]{2} = \{\pi^i(k-1)k \colon k \in \Z_m, 0 \leq i < q\}$ and
      \item $\Lang[\varphi]{3} = \{\pi^i(k-1)k(k+1) \colon k \in \Z_m, 0 \leq i < q\} \cup \{(k-1)k\pi^{-i}(k+1) \colon k \in \Z_m, 0 \leq i < q\}$.
    \end{itemize}
  \end{lemma}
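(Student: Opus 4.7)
The plan is to exploit the rigid arithmetic structure of $\varphi(k) = k(k+1)\cdots(k+b-1)$: its only internal length-$2$ factor is a consecutive pair $(j, j+1)$ and its only internal length-$3$ factor is a triple $(j-1, j, j+1)$, so every other short factor must arise at a boundary $\varphi(a)\varphi(a')$ with $aa' \in \Lang[\varphi]{2}$. The whole statement then reduces to bookkeeping about how the permutation $\pi(x) = x + (b-1)$ and its iterates interact with these boundary contributions. For each of the two set equalities I would prove the forward inclusion by induction on $i$, producing the required factors via repeated substitution, and the reverse inclusion by induction on $n$, analyzing case-by-case how a factor of $\varphi^{n+1}(s) = \varphi(\varphi^n(s))$ sits relative to image boundaries.

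For the length-$2$ claim, the base case $i = 0$ places $(k-1)k$ inside $\varphi(k-1)$; the inductive step takes $\pi^{i-1}(k-1)k \in \Lang[\varphi]{2}$, juxtaposes the $\varphi$-images, and reads off the boundary pair $(\pi^i(k-1), k)$, with $\pi^q = \mathrm{id}$ capping the range of $i$ at $q-1$. Conversely, a length-$2$ factor of $\varphi^{n+1}(s)$ is either internal, giving $(k-1)k$, or it spans a boundary as $\pi(u_j) u_{j+1}$, and the inductive hypothesis applied to $u_j u_{j+1}$ writes $u_j = \pi^t(u_{j+1} - 1)$, whence $\pi(u_j) = \pi^{t+1}(u_{j+1} - 1)$ lands in the claimed set.

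For length $3$ the induction breaks into three cases: (a) internal to some $\varphi(u_j)$, yielding $(k-1)k(k+1)$, which is the $i = 0$ member of both sets and is vacuous when $b = 2$; (b) spanning a boundary with one letter from the first image, yielding $\pi(u_j) u_{j+1} (u_{j+1}+1)$, which by the length-$2$ characterization becomes $\pi^{t+1}(k-1)k(k+1)$, a member of the first set; and (c) spanning a boundary with two letters from the first image, yielding $(k-1)k u_{j+1}$ with $k = \pi(u_j)$. No length-$3$ factor can span three distinct images because with $b \geq 2$ the middle image contains at least two positions. The forward inclusions for each set are handled symmetrically: iterate $\varphi$ on a short witness (an internal factor, or a pair in $\Lang[\varphi]{2}$) and read off the desired factor at a boundary.

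The main bookkeeping obstacle will be case (c). To place $(k-1)k u_{j+1}$ in the second set, I need to exhibit $i \in \{0, \ldots, q-1\}$ with $u_{j+1} = \pi^{-i}(k+1)$. From $u_j u_{j+1} \in \Lang[\varphi]{2}$ and $u_j = \pi^{-1}(k)$ we get $\pi^{-1}(k) = \pi^t(u_{j+1} - 1)$ for some $t$, and since $\pi^r$ is translation by $r(b-1)$ this commutes with adding $1$, giving $u_{j+1} = \pi^{-(t+1)}(k+1)$. Taking $i \equiv t+1 \pmod{q}$ places $i$ in the required range because $\pi$ has order exactly $q$. The forward direction of the second set is the inverse of this same calculation applied to a suitably chosen pair $aa' \in \Lang[\varphi]{2}$.
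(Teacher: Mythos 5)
Your argument is correct and follows essentially the same route as the paper: the length-$2$ language is computed by iterating the substitution on the seed pairs $(k-1)k$ until the orbit of $\pi$ closes after $q$ steps, and the length-$3$ language then falls out of the observation that any length-$3$ factor has either its first two or its last two letters inside a single $\varphi$-image, combined with the length-$2$ characterization. The paper's proof is just terser, compressing your three-case boundary analysis into a single sentence.
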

  \begin{proof}
    Set $L_0 = \{(k-1)k \colon k \in \Z_m\}$. Clearly $L_0 \subseteq \Lang[\varphi]{2}$. Let $L_{j+1}$ to be the set of
    factors of length $2$ of the words in $\varphi(L_j)$. By the definition of $\pi$, we have
    $L_j = \{\pi^i(k-1)k \colon k \in \Z_m, 0 \leq i < j + 1\}$. Since $L_q = L_{q-1}$, we have
    $\Lang[\varphi]{2} = L_{q-1}$.

    By the form of $\varphi$, either the first two letters of a factor of length $3$ are equal or its last two letters
    are. The claim thus follows from the form of the factors of length $2$.
  \end{proof}

  The following lemma concerning the synchronization delay of $\varphi$ is proven in
  \cite{2012:factor_frequencies_in_generalized_thue-morse_words}; we repeat the proof here.

  \begin{lemma}\label{lem:synch_delay}
    The substitution $\varphi$ has synchronization delay $2b$.
  \end{lemma}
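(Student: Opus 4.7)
The plan is to prove that the synchronization delay is exactly $2b$ by matching bounds: every word of $\Lang{\varphi}$ of length at least $2b$ admits a synchronization point, and some word of length $2b-1$ does not.

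The key observation is that, because $\varphi(k) = k(k+1)\cdots(k+b-1)$, consecutive letters strictly inside a single $\varphi$-image always differ by $+1$ modulo $m$. Hence in any interpretation of a word $w \in \Lang{\varphi}$ as a factor of $\varphi(z)$, a position of $w$ whose successor differs from it by something other than $+1$ must be a $\varphi$-image boundary---one such ``non-$+1$'' transition already pins down a unique boundary, giving a synchronization point. For the upper bound, take $w \in \Lang{\varphi}$ with $\abs{w} \geq 2b$. If some consecutive difference in $w$ is not $+1$, we are done by the observation. Otherwise $w$ is an arithmetic progression of step $+1$, and it is enough to find a synchronization point inside any length-$2b$ factor of $w$ (which then extends to $w$). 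Fixing such a factor $c(c+1)\cdots(c+2b-1)$, each offset $i \in \{1, \dots, b-1\}$ would require the three top-level letters $(c-i)(c-i+b)(c-i+2b)$ to lie in $\Lang[\varphi]{3}$; by \autoref{lem:factors_2_3}, every word in $\Lang[\varphi]{3}$ has its first two or its last two letters differing by $+1$, but here both differences equal $b$, forcing $b \equiv 1 \pmod{m}$ and contradicting the aperiodicity of $\infw{t}_{b,m}$. Hence only offset $0$ is admissible, and the image boundary at position $b$ is a synchronization point.

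For the lower bound, I would take $w = c(c+1)\cdots(c+2b-2)$ for any $c \in \Z_m$. Admitting offset $0$ requires $c(c+b) \in \Lang[\varphi]{2}$; admitting offset $1$ requires $(c-1)(c+b-1) \in \Lang[\varphi]{2}$. Both conditions reduce, through the parametrization of $\Lang[\varphi]{2}$ in \autoref{lem:factors_2_3}, to the identity $q(b-1) \equiv 0 \pmod{m}$, which holds by the very definition of $q$ (one witness is $c(c+b) = \pi^{q-1}(c+b-1)(c+b)$). These produce two distinct valid interpretations of $w$ as a factor of some $\varphi(z)$, placing boundaries at positions $b$ and $b-1$ respectively, so $w$ has no synchronization point.

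The step I expect to be the main obstacle is the offset-by-offset analysis in the upper bound: one must correctly identify the top-level three-letter factor $(c-i)(c-i+b)(c-i+2b)$ forced by offset $i$ and invoke \autoref{lem:factors_2_3} to eliminate every non-trivial offset at once. Once this is handled, the lower bound and the remaining arithmetic in $\Z_m$ are routine.
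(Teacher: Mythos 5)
Your proof is correct and follows essentially the same route as the paper's: for the upper bound, a non-successor transition immediately yields a synchronization point, and otherwise the word is an arithmetic progression whose only possible ancestor is two letters long because a three-letter ancestor $(c-i)(c-i+b)(c-i+2b)$ would contradict \autoref{lem:factors_2_3} and the aperiodicity assumption $b \not\equiv 1 \pmod{m}$. Your lower-bound witness $c(c+1)\cdots(c+2b-2)$ with the two ancestors $c(c+b)$ and $(c-1)(c+b-1)$ is exactly the paper's word $u$ (with $c = k-b$), so the arguments coincide.
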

  \begin{proof}
    Consider a word $w$ of $\Lang{\varphi}$ of length $2b$. If $w$ contains a factor $k\ell$ with
    $\ell \neq k + 1$, then the factor $k\ell$ cannot occur inside a $\varphi$-image, so the position where $\ell$
    occurs marks a synchronization point. If such a factor does not occur in $w$, then the word $w$ is of the form
    $k(k+1) \cdots (k + 2b - 1)$, that is, $w = \varphi(k(k+b))$. Suppose for a contradiction that $w$ has ancestor
    $x_1 x_2 x_3$. Due to the form of $w$, we have $x_2 = x_1 + b$ and $x_3 = x_1 + 2b$, that is,
    $x_1 (x_1 + b) (x_1 + 2b) \in \Lang{\varphi}$. This is impossible as $x_1 + b \neq x_1 + 1$ and
    $x_1 + 2b \neq x_1 + b + 1$ due to our assumption that $b \neq 1$. Thus the only ancestor of $w$ is $k(k+b)$. We
    have thus shown that $L \leq 2b$.

    Fix $k \in \Z_m$. Since $k - b = k - 1 + (q-1)(b-1)$, we see that $(k-b)k \in \Lang{\varphi}$ by
    \autoref{lem:factors_2_3}. Consider the prefix $u$ of $\varphi((k-b)k)$ of length $2b - 1$. This
    prefix has $\varphi(k-1)$ as a suffix, and its prefix of length $b-1$ is a suffix of $\varphi(k-b-1)$.
    Because $(k-1-b)(k-1) \in \Lang{\varphi}$, the word $u$ has two ancestors proving that $L \geq 2b$.
  \end{proof}

  Since $\varphi$ is permutive, it now follows that every choice sequence in $W(\varphi)$ having length at least
  $2b + 1$ is obtainable by substitution from a shorter choice sequence. Next we describe the choice sequences of
  length at most $2b$.
  
  \begin{proposition}\label{prp:gtm_short}
    Let $\alpha$ in $W(\varphi)$ be an irreducible choice sequence of length $n$.
    \begin{enumerate}[(i)]
      \item If $2 \leq n \leq b + 1$, then $\alpha = \diamond 1^{n-2} a$ with $\diamond \in \{1,\ldots,m\}$ and
            $a \in \{2,\ldots,q\}$.
      \item If $b + 2 \leq n \leq 2b$, then $\alpha = \diamond 1^{n-2} a$ or $\alpha = \diamond 1^\ell 2 1^{b-1} 2$ with
            $\diamond \in \{1,\ldots,m\}$ and $a \in \{2,\ldots,q\}$.
    \end{enumerate}
    Moreover, each word of such form is in $W(\varphi)$.
  \end{proposition}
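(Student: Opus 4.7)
My plan is a two-part argument. First, I construct explicit winning strategies showing that each listed form lies in $W(\varphi)$ (the ``moreover'' clause). Second, I show conversely that any irreducible choice sequence in $W(\varphi)$ of length $n \leq 2b$ must take one of the listed shapes.

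For the construction of winning strategies: take $\alpha = \diamond 1^{n-2} a$ with $\diamond \leq m$ and $a \leq q$. Alice picks any $\diamond$-element subset of $\Z_m$ at round 1, and upon Bob's pick $k$ forces letters so that the length-$(n-1)$ word $w$ built by round $n-1$ ends at a $\varphi$-image boundary: for $n \leq b+1$ she plays out $w = k(k+1)\cdots(k+n-2)$, the length-$(n-1)$ suffix of $\varphi(k+n-b-1)$; for $b+2 \leq n \leq 2b$ she forces $w$ to be a length-$(n-1)$ suffix of some $\varphi(\ell_1)\varphi(\ell_2)$ ending at the boundary of $\varphi(\ell_2)$, with $\ell_1$ determined by $k$ and $\ell_2$ any valid successor of $\ell_1$ in $\Lang[\varphi]{2}$. \autoref{lem:factors_2_3} then guarantees that such a boundary-ending $w$ admits exactly $q$ right-extensions, namely the first letters of valid successor $\varphi$-images, so Alice picks any $a \leq q$ of them at round $n$. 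For $\alpha = \diamond 1^\ell 2 1^{b-1} 2$ with $\ell = n-b-2$, Alice proceeds analogously, forcing a length-$(\ell+1)$ prefix ending at a boundary, branching at round $\ell+2$ by selecting two letters $c_1, c_2$, forcing the remainder of each $\varphi(c_j)$ through round $n-1$ so that the built word becomes $w_0 \varphi(c_j)$, and picking two extensions at round $n$. The existence of such a pair $(c_1, c_2)$ for every $k$ follows from \autoref{lem:factors_2_3} via the two natural interpretations of $w_0$ in $\Lang{\varphi}$ (as a suffix of one $\varphi$-image ending at a boundary, or as a prefix/interior of a $\varphi$-image continuing into $c$).

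For the converse, let $\alpha \in W(\varphi)$ be irreducible of length $n \leq 2b$ and fix a winning strategy for Alice. I claim $\alpha_i = 1$ for all $1 < i < n$ except possibly one position, which must then equal $n-b$ and carry $\alpha_i = 2$. Suppose $\alpha_i \geq 2$ for some $1 < i < n$ and let $w$ be the length-$(i-1)$ word played by round $i-1$. For each branching letter $c$ at round $i$, the length-$(n-1)$ word built by round $n-1$ must admit at least $\alpha_n \geq 2$ right-extensions. Using \autoref{lem:factors_2_3} to analyze right-extension counts of factors near $\varphi$-image boundaries, Alice's forces at rounds $i+1, \ldots, n-1$ must together with $c$ complete a full $\varphi$-image $\varphi(c)$; this forces $n - i = b$, so $i = n - b$, and the same lemma bounds $\alpha_{n-b} \leq 2$ by exhibiting at most two interpretations of $w$ that yield a valid $c$. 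For $n \leq b+1$ the equation $i = n - b > 1$ has no solution, ruling out interior branchings altogether and leaving only the shape $\diamond 1^{n-2} a$.

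The main obstacle is the combinatorial bookkeeping in the converse: short factors (of length below the synchronization delay $2b$) generally occur in $\Lang{\varphi}$ in several ways, so counting their right-extensions requires carefully identifying each valid alignment with $\varphi$-images and combining the resulting extension sets via \autoref{lem:factors_2_3}.
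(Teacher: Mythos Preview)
Your overall architecture matches the paper's: build explicit strategies for the listed forms, then argue the converse via \autoref{lem:factors_2_3}. The existence direction is essentially correct, though your description of the strategy for $\diamond 1^\ell 2 1^{b-1} 2$ is vaguer than the paper's explicit tree (Figure~\ref{fig:gtm_subtree}); in particular, the phrase ``two natural interpretations of $w_0$'' hides the actual verification that \emph{each} of the two branches $c_1,c_2$ leads to a word with at least two right-extensions, which does require a small computation with $\Lang[\varphi]{3}$.

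The converse, however, has two real gaps. First, you never bound the final letter $\alpha_n$. Your claim addresses only the interior positions $1 < i < n$; you must separately prove $\alpha_n \leq q$ in the case with no interior branching, and---more delicately---$\alpha_n \leq 2$ in the case with an interior branching at position $n-b$. The paper handles both by explicit contradiction: if the bound fails, Bob can pick a letter outside the valid extension set determined by \autoref{lem:factors_2_3}, and the resulting ancestor lies outside $\Lang[\varphi]{3}$. The second bound is the subtler one, and you do not touch it.

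Second, your justification for $i = n-b$ is incomplete. You assert that rounds $i$ through $n-1$ ``complete a full $\varphi$-image $\varphi(c)$'', but this presupposes that the branching letter $c$ sits at a $\varphi$-image boundary in the relevant interpretation. That is not automatic: since $|\alpha| \leq 2b$ is below the synchronization delay, the prefix $w$ of length $i-1$ may admit several decompositions, and one of Bob's choices (the consecutive continuation $c = k+i-1$) need not fix a boundary at all. The paper resolves this by first proving that the prefix of any winning play up to the interior branching must be of the consecutive form $k(k+1)\cdots(k+i-2)$---arguing that a non-consecutive pair would force a decomposition incompatible with $n \leq 2b$---and only then deduces $|v| = b-1$ from the length constraint. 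You need that preliminary step (or an equivalent) before the ``$n - i = b$'' conclusion goes through; what you flag as ``bookkeeping'' is in fact the missing argument.
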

  \begin{proof}
    Consider first the case $2 \leq n \leq b + 1$. Write $\alpha = \diamond u r$ with letters $\diamond$ and $r$, and
    let $w$ be a winning play in the game with choice sequence $\alpha$. First we argue that the prefix of $w$ of
    length $n - 1$ is of the form $k (k + 1) \cdots (k+n-2)$ for some $k \in \Z_m$, that is, it equals
    $\varphi_{n-1,m}(k)$. If this were not the case, then this prefix equals $xijy$ for some words $x$ and $y$ and
    letters $i$ and $j$ such that $j \neq i + 1$. Thus $w$ has decomposition $\abs{xi} \bmod{b}$. Since $w$ is a
    winning play, Bob cannot choose inside a $\varphi$-image, and it must thus be that $\abs{jy}$ is a positive
    multiple of $b$. This is impossible as now $\abs{\alpha} > \abs{xijy} \geq b + 1$. Due to the restricted form of
    the prefix of $w$ of length $n - 1$, we see that Bob cannot make any choices between his first and last turns, so
    $\alpha = \diamond 1^{n-2} r$. Suppose for a contradiction that $r > q$. Now Bob can pick a letter $c$ such that
    $c \notin \{\pi^i(k+n-1) \colon 0 \leq i < q\}$. It follows that $\pi^{-1}(k+n-2)c$ is an ancestor of the played
    word $\varphi_{n-1,m}(k)c$. This is however a contradiction with \autoref{lem:factors_2_3}. Therefore $r \leq q$.
    It is now clear that any word of the form $\diamond 1^{n-2} r$ with $\diamond \in \{1,\ldots,m\}$ and
    $r \in \{2,\ldots,q\}$ is in $W(\varphi)$: after Bob has chosen $k$, Alice forces him to play $\varphi_{n-1,m}(k)$
    after which she lets him choose among the $q$ letters $c$ such that $\pi^{-1}(k+n-2)c$ is in $\Lang[\varphi]{2}$.

    Suppose then that $b + 2 \leq n \leq 2b$. If $\alpha$ contains exactly two letters that are greater than $1$, one
    at the beginning and one at the end, then $\alpha$ must again be of the form $\diamond 1^{n-2} a$ with
    $\diamond \in \{1,\ldots,m\}$ and $a \in \{2,\ldots,q\}$ (after Bob has chosen $k$, Alice forces him to play
    $\varphi_{n-b-1}(k)\varphi(\pi^{-1}(k+n-b-2)+1)$ after which she lets him choose among the $q$ letters $c$ such
    that $\pi^{-1}(k+n-b-2)(\pi^{-1}(k+n-b-2)+1)c \in \Lang{\varphi}$; see \autoref{lem:factors_2_3}). Otherwise write
    $\alpha = \diamond urvs$ with letters $\diamond$, $r$, and $s$ such that $r, s > 1$, and let $w$ again be a winning
    play in the game with choice sequence $\alpha$. Analogous to the arguments of the preceding paragraph, we see that
    $\abs{\alpha} > \abs{\diamond urv} \geq 2b+1$ unless the prefix of $w$ of length $\abs{u} + 1$ is of the form
    $\varphi_{\abs{u}+1,m}(k)$ for some $k \in \Z_m$. Again, we have $u = 1^{\abs{u}}$ and, further, $v = 1^{b-1}$.
    Assume for a contradiction that $r \geq 3$. After $\abs{u} + 1$ rounds Bob can choose a letter $c$ such that
    $c \notin \{k+\abs{u}+1, \pi^{-1}(k+\abs{u})+1\}$. Clearly the word played so far has decomposition
    $\abs{u}+1 \bmod{b}$, so during her next $b-1$ turns Alice must let Bob complete the $\varphi$-image beginning with
    $c$. During his final turn Bob can pick a letter $d$ such that $d \neq c+1$. It follows that the played word has
    the word $\pi^{-1}(k+\abs{u})cd$ as an ancestor. By \autoref{lem:factors_2_3}, this ancestor is not in
    $\Lang[\varphi]{3}$, so Bob wins. This is a contradiction, so $r = 2$. The preceding arguments also show that
    $w$ must have $\varphi_{\abs{u}+1,m}(k)(k+\abs{u}+1)$ or $\varphi_{\abs{u}+1,m}(k)(\pi^{-1}(k+\abs{u})+1)$ as a
    prefix. Let us consider the former case. Since Bob wins if he can choose inside a $\varphi$-image, Alice must now
    force Bob to play $\varphi_{n-1,m}(k)$ to ensure that the word played so far has multiple ancestors. If $s \geq 3$,
    then as his ultimate move Bob can pick a letter $c$ such that $c \notin \{k+n-1,\pi^{-1}(k+n-2)+1\}$. Then $w$ has
    unique ancestor $\pi^{-1}(k+n-2-b)\pi^{-1}(k+n-2)c$. Our assumption that $b \neq 1$ implies by
    \autoref{lem:factors_2_3} that $\pi^{-1}(k+n-2)+1 = c$, which is impossible by the choice of $c$. Thus $s = 2$,
    that is, $\alpha = \diamond 1^{\abs{u}} 21^{b-1}2$. It is now straightforward to derive a winning strategy for
    Alice for any $\diamond \in \{1,\ldots,m\}$. The subtree of length $n-1$ of such a strategy is depicted in
    \autoref{fig:gtm_subtree}; it is readily verified that the corresponding strategy is winning for Alice using
    \autoref{lem:factors_2_3}. The claim follows.
  \end{proof}

  \begin{figure}
    \centering
    \begin{tikzpicture}
      \tikzset{level 1/.style={level distance=80pt}}
      \tikzset{level 2/.style={level distance=190pt}}
      \Tree [.$k\cdots(k+\ell)$ [.$(k+\ell+1)\cdots(k+\ell+b)$ [.$k+\ell+b+1$ ] [.$k+\ell+2$ ] ] [.$(\pi^{-1}(k+\ell)+1)\cdots(\pi^{-1}(k+\ell)+b)$ [.$\pi^{-1}(k+\ell)+2$ ] [.$k+\ell+2$ ] ] ]
    \end{tikzpicture}
    \caption{The subtree of Alice's winning strategy after Bob has chosen $k$ in the game with choice sequence $\diamond 1^\ell 21^{b-1}2$.}\label{fig:gtm_subtree}
  \end{figure}
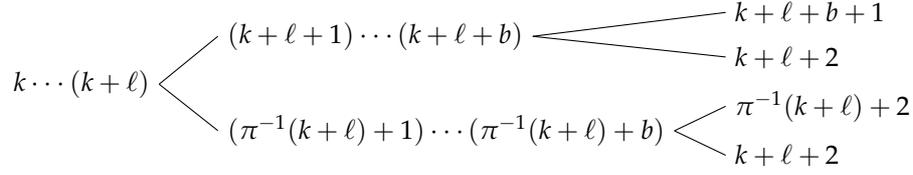

  Since $\varphi$ is permutive, all long enough choice sequences $\alpha$ in $W(\varphi)$ are of the form
  $\sigma_i(\diamond)\sigma_b(w)a$, where $\diamond wa \in W(\varphi)$ for letters $\diamond$ and $a$. Combining this
  with \autoref{prp:gtm_short}, we see that the winning shift $W(\varphi)$ indeed has the same form as described in
  \autoref{sec:tm_example}. Either $\alpha$ is of the form $\diamond 1^{\abs{\alpha}-2} a$ with
  $\diamond \in \{1,\ldots,m\}$ and $a \in \{2,\ldots,q\}$ or $\smash[t]{\alpha = \diamond 1^\ell 2 1^{b^k-1} 2}$,
  where $\diamond \in \{1,\ldots,m\}$, $k$ is the largest $k$ such that $b^k < \abs{\alpha}$ and
  $0 \leq \ell \leq b^k - b^{k-1} - 1$.

  \autoref{prp:gtm_short} together with \autoref{thm:first_difference} implies that for $n \geq 2$ the first difference
  function $\Delta(n)$ for $\infw{t}_{b,m}$ takes only two values: $(q-1)m$ and $qm$. Using induction, we can derive
  the values of $\Delta(n)$ and $C(n)$ (the factor complexity function of $\infw{t}_{b,m}$) for any $n \geq 1$; see
  \autoref{tbl:gtm_complexity}. These functions have been derived by \v{S}. Starosta with other methods
  \cite{2012:generalized_thue-morse_words_and_palindromic_richness}.

  {\renewcommand{\arraystretch}{1.3}
  \begin{table}
    \centering
    \begin{tabular}{|c|c|c|}
      \hline
      $n$                                                   & $\Delta(n)$               & $C(n)$ \\ \hline
      $1$                                                   & $m - 1$                   & $m$ \\ \hline
      $2 \leq n \leq b + 1$                                 & $(q-1)m$                  & $qm(n-1)-m(n-2)$ \\ \hline
      $b^{k+1} + \ell + 1$                                  & \multirow{2}{*}{$qm$}     & \multirow{2}{*}{$qm(n-1)-m(b^{k+1}-b^k)$} \\
      $k \geq 0, 1 \leq \ell \leq b^{k+1} - b^k$            & & \\ \hline
      $2b^{k+1} - b^k + \ell + 1$                           & \multirow{2}{*}{$(q-1)m$} & \multirow{2}{*}{$qm(n-1)-m(b^{k+1}-b^k+\ell)$} \\
      $k \geq 0, 1 \leq \ell \leq b^{k+2} - 2b^{k+1} + b^k$ & & \\ \hline
    \end{tabular}
    \caption{The values of the first difference function $\Delta(n)$ and the factor complexity function $C(n)$ of the generalized Thue-Morse word $\infw{t}_{b,m}$.}\label{tbl:gtm_complexity}
  \end{table}}

  \section*{Acknowledgments}
  The work of the first author was supported by the Finnish Cultural Foundation by a personal grant.

  \printbibliography
\end{document}